%% file: main-arXiv.tex
\documentclass[journal]{IEEEtran}
\usepackage{cite}
\usepackage{amsmath,amssymb,amsfonts}
\usepackage{graphicx}
\usepackage{textcomp}
\usepackage[table]{xcolor}
\usepackage{subfigure}
\usepackage{threeparttable}
\usepackage{amssymb,mathrsfs,amsmath}
\usepackage{url}
\usepackage{makecell}
\usepackage{mdwtab}
\usepackage{booktabs}
\usepackage{array}
\usepackage{mdwmath}
\usepackage{eqparbox}
\usepackage{multirow}
\usepackage{color}
\usepackage{threeparttable}
\usepackage{array}
\usepackage{booktabs}
\usepackage{textcase}
\usepackage{siunitx}
\usepackage{fbox}

\usepackage{booktabs}

\usepackage{pifont} 

\usepackage{bbding} 
\newcommand{\cmark}{\textcolor{green!80!black}{\ding{51}}}
\newcommand{\xmark}{\textcolor{purple}{\ding{55}}}

\newenvironment{packeditemize}{
	\begin{list}{$\bullet$}{
			\setlength{\labelwidth}{4pt}
			\setlength{\itemsep}{0pt}
			\setlength{\leftmargin}{\labelwidth}
			\addtolength{\leftmargin}{\labelsep}
			\setlength{\parindent}{0pt}
			\setlength{\listparindent}{\parindent}
			\setlength{\parsep}{0pt}
			\setlength{\topsep}{1pt}}}{\end{list}}

\definecolor{light_cyan}{rgb}{0.53, 0.75, 0.77}
\definecolor{light_blue}{rgb}{0.466, 0.655, 0.94}
\definecolor{light_pink}{rgb}{0.98, 0.55, 0.565}
\definecolor{light_yellow}{rgb}{0.98, 0.83, 0.32}

\renewcommand*{\arraystretch}{1.5}%
\definecolor{tabred}{RGB}{230,36,0}%
\definecolor{tabgreen}{RGB}{0,116,21}%
\definecolor{taborange}{RGB}{250,124,30}%
\definecolor{tabbrown}{RGB}{171,70,0}%
\definecolor{tabyellow}{RGB}{251,253,169}%
\newcommand*{\vcorr}{%
  \vadjust{\vspace{-\dp\csname @arstrutbox\endcsname}}%
  \global\let\vcorr\relax
}%

\usepackage{algorithm}
\usepackage{diagbox}
\usepackage[section]{placeins}
\usepackage{algpseudocode}
\usepackage[inkscapelatex=false]{svg}

\usepackage{amsthm}
\theoremstyle{plain}       
\newtheorem{thm}{Theorem}

\theoremstyle{definition}

\newtheorem*{prf}{Proof}

\usepackage{url}

\usepackage{hyperref}
\hypersetup{
     colorlinks = true,
     linkcolor = blue,
     anchorcolor = red,
     citecolor = magenta,
     filecolor = blue,
     urlcolor = black,
}

\algnewcommand{\LeftComment}[1]{\Statex \(\triangleright\) #1}

\def\BibTeX{{\rm B\kern-.05em{\sc i\kern-.025em b}\kern-.08em
    T\kern-.1667em\lower.7ex\hbox{E}\kern-.125emX}}

\begin{document}

\title{Client-Cooperative Split Learning }

\author{\IEEEauthorblockN{Haiyu Deng\IEEEauthorrefmark{1}, Yanna Jiang\IEEEauthorrefmark{1}, Guangsheng Yu\IEEEauthorrefmark{1}, Qin Wang\IEEEauthorrefmark{1}\IEEEauthorrefmark{2},  \\
Xu Wang\IEEEauthorrefmark{1}, Wei Ni\IEEEauthorrefmark{3}, Shiping Chen\IEEEauthorrefmark{2}, Ren Ping Liu\IEEEauthorrefmark{1}
}

\IEEEauthorrefmark{1}\textit{University of Technology Sydney} $|$
\IEEEauthorrefmark{2}\textit{CSIRO Data61}  $|$
\IEEEauthorrefmark{3}\textit{Edith Cowan University}
\thanks{H. Deng and Y. Jiang contributed equally.}
}

\maketitle

\input{sections_1/0_abstract}

\begin{IEEEkeywords}
Split Learning, Privacy, Copyright Protection 
\end{IEEEkeywords}


\input{sections_1/1_introduction}
\input{sections_1/2_warmups}
\input{sections_1/3_system}
\input{sections_1/4_design}

\input{sections_1/5_experiment}
\input{sections_1/6_relatedwork}

\input{sections_1/7_conclusion}

\bibliographystyle{bibliography/IEEEtran}
\bibliography{bibliography/bib}

\end{document}

%% file: sections_1/0_abstract.tex
\begin{abstract}

\textcolor{black}{Model training is increasingly offered as a service for resource-constrained data owners to build customized models. Split Learning (SL) enables such services by offloading training computation under privacy constraints, and evolves toward \textit{serverless} and \textit{multi-client} settings where model segments are distributed across training clients.}
This cooperative mode assumes partial trust: data owners hide labels and data from trainer clients, while trainer clients produce verifiable training artifacts and ownership proofs.
We present \textsc{CliCooper}, a multi-\underline{cli}ent \underline{cooper}ative SL framework tailored for cooperative model training services in heterogeneous and partially trusted environments, where one client contributes data, while others collectively act as SL trainers. \textsc{CliCooper} bridges the privacy and trust gaps through two new designs. First, differential privacy-based activation protection and secret label obfuscation safeguard data owners’ privacy without degrading model performance. Second, a dynamic chained watermarking scheme cryptographically links training stages on model segments across trainers, ensuring verifiable training integrity, robust model provenance, and copyright protection. 
Experiments show that \textsc{CliCooper} preserves model accuracy while enhancing resilience to privacy and ownership attacks. It reduces the success rate of clustering attacks (which infer label groups from intermediate activation) to 0\%, decreases inversion-reconstruction (which 
recovers training data) similarity from 0.50 to 0.03, and limits model-extraction–based surrogates to about 1\% accuracy, comparable to random guessing.

\end{abstract}

%% file: sections_1/1_introduction.tex
\section{Introduction}\label{sec:intro}

Under privacy and data-sovereignty constraints, Split Learning (SL) is a promising technique that enables Artificial Intelligence (AI) model training services for resource-constrained data owners without exposing raw data: clients compute intermediate activations, and a server completes training~\cite{twoPartySL_5,matsubara2022split_6,outsource_4,hu2025split_7}. This paradigm assumes a well-provisioned server fully trusted for computation and provenance~\cite{jung2025split}. In practice, deployments differ: personal and edge devices offer surplus yet fragmented compute, which is insufficient individually for large models but becomes valuable when aggregated~\cite{zafari2020let}.
This motivates a \textbf{serverless, multi-client collaboration}~\cite{lin2024split, rawal2013multi}: one client provides data but limited compute; several other clients contribute compute and jointly play the ``server'' role. 
This emerging setting becomes both feasible and desirable, better meeting costs by amortizing idle edge/GPU cycles~\cite{li2024split}
, avoiding centralized rental, and improving availability by enabling elastic, on-demand participation~\cite{hammoud2022demand}.

However, this multi-client collaboration setting is only partially trusted~\cite{twoPartySL_5}.
This shift raises coupled challenges that are insufficiently addressed by current SL frameworks:
\begin{packeditemize}

\item \textcolor{violet}{\textbf{RQ1} (data privacy)}: \textit{How can the data-providing client preserve the privacy of raw inputs and labels while still enabling effective SL?}

\item \textcolor{violet}{\textbf{RQ2} (layer ownership)}: \textit{How can trainer clients present cryptographically anchored ownership claims for their contributions to access compensation?}

\item \textcolor{violet}{\textbf{RQ3} (unauthorized-use defense)}: \textit{How can the collaboratively trained model resist unauthorized use?}

\end{packeditemize}
Building on these RQs, we present \textsc{CliCooper}, a multi-client cooperative SL framework tailored to the serverless, partially trusted, multi-client setting. 
\textsc{CliCooper} integrates secret mapping labels with Differential Privacy (DP)-guarded activations and a provenance-preserving watermark chain.

For the data-providing client, raw data does not need to be disclosed; instead, each true class is mapped to a set of pseudo-labels, which hides task semantics and label quantity while expanding the effective label space.
In parallel, data are augmented to match the expanded label space, and inputs are passed through an embedder whose intermediate activations are perturbed with calibrated DP noise before any upload.
Training thus proceeds only on DP-protected activations paired with pseudo-labels, which (i) suppresses raw-input and label inference to address \textbf{\textcolor{violet}{RQ1}}, and (ii) binds the model’s practical usefulness to authorization, since only parties holding the secret true-to-pseudo mapping can faithfully interpret predictions, thereby constraining unauthorized reuse for \textbf{\textcolor{violet}{RQ3}}.

For the trainer clients, after coordinating the execution order and model structure, each receives only the upstream activation, keeps its local parameters opaque, and forwards its output activation to the next participant. 
At agreed checkpoints, a chained watermark is derived from the received activation digest and then embedded into the client’s segment.
The resulting watermarks are cryptographically linked across segments, creating a tamper-evident lineage that ties each segment to the collaborative run and the intended task. 
This lineage enables auditable ownership assertion, lawful entitlement to compensation, and accountability against segment theft or task-mismatch reuse, fulfilling \textbf{\textcolor{violet}{RQ2}} while preserving the confidentiality constraints of \textbf{\textcolor{violet}{RQ3}}.

The key contributions can be summarized as follows:

\begin{packeditemize}

\item We address privacy risks in partially trusted, client-cooperative SL by concealing label semantics and reducing exposure of raw inputs (via label expansion and calibrated noise on intermediate activation), preventing task leakage and curtailing inversion while keeping the true–pseudo mapping under the data owner’s control.

\item We ensure training traceability and model ownership in multi-client settings by binding checkpoints with cryptographically chained watermarks, enabling verifiable training integrity, robust provenance, and fair compensation, and deterring fabricated training trajectories.

\item We validate at scale across diverse datasets and architectures: baseline accuracy is preserved and improves by up to 2\%; internal clustering attacks on client-supplied activation drop to 0\%; inversion-reconstruction similarity falls from 0.50 to 0.03; and in black-box settings, model-extraction attacks fail, with surrogate models trained on API-labeled data achieving $\sim$1\% accuracy (random guessing).

\end{packeditemize}

The rest of this paper is organized as follows. \S\ref{sec:warm} presents state-of-the-art SL and watermarking technologies. \S\ref{sec:system} introduces the system model. \S\ref{sec:design} describes our framework. \S\ref{sec:experiment} shows empirical results. \S\ref{sec:related work} summarizes privacy threats and copyright protection. \S\ref{sec:conclusion} concludes this paper.

%% file: sections_1/2_warmups.tex
\section{Technical Warm-ups}\label{sec:warm}

\subsubsection{Split Learning Architectures}

SL allows multiple entities to collaboratively train a model while keeping local data private, typically in two ways: 
\begin{packeditemize}
    \item In \textit{vanilla SL}~\cite{hu2025split_7,vepakomma2018split_13,poirot2019split_40, jung2025split}, the client encodes local data and sends the resulting intermediate activation to the server, which performs most of the computation and, during gradient descent, requires access to label information~\cite{twoPartySL_5}. 
    \item \textit{U-shaped SL}~\cite{gupta2018distributed_12,u_shaped_38,u_shaped_39} is tailored to unmet label-privacy requirements in vanilla SL settings. A client will not share labels with the server.  By preprocessing activations after the split on the client side, the client preserves label confidentiality while maintaining the integrity.
\end{packeditemize}
Both assume a server with sufficient compute to train the model independently. In Vanilla SL, clients share labels for gradient computation, which exposes sensitive information. In contrast, U-shaped SL moves gradient computation to the client side, thereby preserving label privacy. However, this design increases communication overhead during the forward and backward passes and adds extra local computation. 

\subsubsection{Watermarking} 

Watermarking via modifying model parameters (weights $W$) was first proposed (EWDNN) by Uchida et al.~\cite{uchida2017embedding_18}. A binary watermark $\Lambda \in\{0,1\}^{B}$ is embedded during training by augmenting the objective:
    $loss(W) = l_w(W) + \lambda l_\Lambda(W)$,
where $l_w$ denotes the main task loss, $\lambda$ is an adjustable trade-off coefficient, and $l_\Lambda$ is the watermark-embedding regularizer:
    $l_\Lambda(W) = -\sum_{i=1}^{B} (\Lambda_{i}\log(\hat\Lambda_i)+(1-\Lambda_{i})\log(1-\hat\Lambda_{i}^{'}))$,
where $B$ denotes the watermark length and $\hat\Lambda$ is the extracted watermark. During embedding, it 
assesses the similarity between the ground-truth watermark and the extracted watermark. The extracted watermark is $\hat\Lambda^{'} = \Phi (\sum kW)$, where $k$ is the secret key used for watermark embedding, and $\Phi (z) = \frac{1}{1-\exp(-z)}$ is the step function. 

In general, weight-modified-based watermarking relies on the embedding regularizer. Wang et al.~\cite{Wang2019Attack_14} showed that such modifications distort the weight distribution, allowing adversaries to detect the presence of a watermark. RIGA~\cite{wang2021riga_19} redesigned the loss function to mitigate this vulnerability:
    $loss(W) = l_w(W) + \lambda l_\Lambda(W)+\beta l_\Delta (W)$,
where $ \beta$ is a tunable coefficient and $l_\Delta$ measures discrepancy between the clean and watermarked models, constraining their divergence to deter watermark detection via weight-distribution tests. FedIPR~\cite{li2022fedipr_20} applied model watermarking to federated learning, using a watermarking method aligned with EWDNN/RIGA.

%% file: sections_1/3_system.tex
\section{System Model}\label{sec:system}

We target a serverless, partially trusted, multi-client cooperative SL model with heterogeneous participants~\cite{IoT}: \textit{one data client owns the private dataset but has limited compute, while several trainer clients contribute surplus yet fragmentary compute}. The group jointly trains a split model by agreeing on split points and an execution order. Each trainer handles a local segment sized to its capacity.
Parties are honest but curious, and there may be divergent interests, so the design must (i) preserve raw input/label privacy for the data owner, (ii) support auditable ownership and lawful entitlement for trainers, and (iii) constrain unauthorized model reuse.

We integrate a privacy-and-authorization layer that couples secret-mapping label expansion with DP protection. The data client never discloses its true labels; instead, each true class is mapped to a set of pseudo-labels, obfuscating task semantics and label quantities. Data augmentation~\cite{data_augment} is introduced to align sample counts with the expanded label space.
Inputs are then passed through an embedder, and the resulting intermediate activations are perturbed with DP noise before any upload, mitigating feature inversion and label inference attacks~\cite{unsplit_27, inversion_26}. 
Training proceeds only on these DP-protected activations paired with pseudo-labels. 
For authorized users who hold the secret true$\leftrightarrow$pseudo mapping, model utility remains intact, while for unauthorized parties, outputs degrade to an unusable level.

\textsc{CliCooper} weaves a chained watermark through the collaboration process. 
At agreed checkpoints, each trainer derives a watermark token from a digest of the received activation and then embeds it into its segment. 
Tokens are cryptographically linked across segments to form a tamper-evident lineage that ties each segment to the collaborative run and intended task, supporting auditable ownership assertions, compensation eligibility, and accountability against segment theft or task-mismatch reuse.
Each trainer exposes only the upstream activation and keeps its local segment opaque to collaborators.
External adversaries interact with the final collaboratively trained model purely as a black box~\cite{backdoor_wm_33}. Under this visibility constraint, our chained watermarking design can preserve the confidentiality guarantees required in a partially trusted setting and is resilient to common attacks, including removal~\cite{kassis2025unmarker_24}, overwriting~\cite{zong2024ipremover_23}, and fine-tuning~\cite{pegoraro2024deepeclipse_21}.

\subsection{Entities} 
\textsc{CliCooper} involves three primary entities under a serverless, partially trusted, multi-client regime:
\begin{packeditemize}
    \item \textit{Data client $\mathcal{C}$} owns the private dataset with ground-truth labels, but has only lightweight computing adequate for embedding/perturbation rather than end-to-end training.
    Its primary requirement is to prevent raw input and label inference and hide task semantics/label quantities from others.

    \item \textit{Trainer clients $\mathcal{T}$} contribute surplus yet fragmentary compute without proprietary data, each controlling a private model segment sized to its capacity.
    As rational, honest-but-curious participants, they do not sabotage the task but may prefer minimal effort for ``free-lunch'' rewards and may be inquisitive about peers’ resources, especially attempting to infer information about $\mathcal{C}$’s data distribution, labels, or sample membership from observed activations.

    \item \textit{Verifier $\mathcal{V}$} is fully trusted and responsible for coordinating and incentive management.
    It can be instantiated by the data client $\mathcal{C}$, a trainer client $\mathcal{T}$, or a neutral third party, endowed with an omniscient audit view that includes the end-to-end activation relay and all segment parameters.
    It has the authority to validate chained watermarks and allocate incentives or penalties, thereby enforcing fair compensation and protocol compliance.
\end{packeditemize}

\subsection{\color{black}Operational Setting and Deployment Context.}
{\color{black}
\textsc{CliCooper} involves a data client $\mathcal{C}$ 
and a set of trainer clients ${\mathcal{T}_i}$.
Instead of relying on centralized servers, the learning pipeline is formed dynamically by chaining multiple trainer segments, each sized according to local capacity. 
A logically separate verifier $\mathcal{V}$ coordinates task instantiation and incentive settlement without participating in training.
This setting naturally captures multi-tenant analytics platforms, edge–cloud learning services~\cite{edge–learning}, and compute marketplaces~\cite{ml_marketplace_34} where participants are elastic, independent, and only partially trusted.

\textsc{CliCooper} treats collaborative model training as a service composed of data provisioning, segmented computation, and verifiable contribution. 
The absence of a centralized training server shifts security and reliability concerns from protocol correctness to service-level risks, such as inference leakage~\cite{unsplit_27}, free-riding~\cite{free-rider}, and ownership disputes~\cite{ownership-dispute} among rational participants. 
Accordingly, the system is designed so that no single trainer can infer task semantics or data properties beyond what is necessary for its assigned segment, while still enabling verifiable attribution of computational contributions. 

These characteristics make \textsc{CliCooper} amenable to extensions, e.g., hybrid SL–FL deployments~\cite{SL-FL}, in which multiple SL pipelines are independent local training services and coordinated at a higher level using federated-style aggregation. 
Supporting this extension does not alter the core trust or privacy assumptions of \textsc{CliCooper}, and only involves modest protocol adaptations, e.g., elevating the verifier to also assume lightweight aggregation responsibilities and extending the chained watermark derivation across aggregation rounds
to preserve the contribution provenance.
In this paper, we focus on the core serverless SL setting and leave a full empirical exploration of hybrid SL–FL deployments to future work.
}

\subsection{Threat Models.}

We distinguish internal users and external adversaries interacting only through exposed APIs. 
All clients are \textbf{rational and honest-but-curious}~\cite{twoPartySL_5}: they share the common interest of completing the main task and thus do not sabotage training, yet they may try to minimize their own effort and may be curious about others’ resources.

\begin{packeditemize}

\item \textbf{Internal users}.
Internal users primarily refer to the trainer clients ${\mathcal{T}_i}$, as the data client $\mathcal{C}$ has minimal visibility without access to others’ parameters, and the verifier $\mathcal{V}$ is fully trusted.
Each $\mathcal{T}_i$ has white-box access to its own segment but sees only \emph{upstream activations} from peers. 
Model parameters of other segments and $\mathcal{C}$’s raw data and true labels remain hidden. 
The dominant risks are curiosity-driven inference and credit/ownership disputes rather than protocol sabotage. 
Typical inference attempts include:

\begin{itemize}
    \item \textit{Clustering attack.} Even without access to $\mathcal{C}$'s raw dataset, $\mathcal{T}$ may apply clustering~\cite{clustering_b, birch, dbscan} to intermediate activation to infer samples of the same class.
    \item \textit{Inversion attack.} Assuming $\mathcal{T}$ has knowledge of $\mathcal{C}$'s model architecture, it may leverage exposed intermediate information to reconstruct the training samples or generate visually similar approximations~\cite{unsplit_27, inversion_26}.
\end{itemize}

{\color{black} We consider a marketplace-style collaboration, where trainer clients are rational and honest-but-curious: they share common interest in completing the main task and thus do not sabotage training but may attempt passive inference from observed activations or seek ``free-lunch'' rewards with minimal effort~\cite{performance-pow}. Unless otherwise stated, trainer clients are assumed non-colluding; even under collusion, trainers still only observe DP-protected activations and operate in the pseudo-label space, while the data client’s raw inputs, true labels, and the secret true$\leftrightarrow$pseudo mapping remain undisclosed. In particular, reusing a pre-trained/open-source segment to bypass training is infeasible because the task semantics are hidden by secret-mapping label expansion and DP perturbation, preventing trainers from identifying a matching pre-trained model in advance.}

{\color{black} We assume that the verifier $\mathcal{V}$ is fully trusted and responsible for settlement: it holds an audit view (end-to-end activation relay and all segment parameters) to verify model utility and chained watermarks and then allocate incentives or penalties. Model runs failing the accuracy requirement are rejected and deemed ineligible for compensation. Notably, $\mathcal{V}$ needs no access to the data client’s raw inputs, true labels, or the secret label mapping; revealing the mapping to any untrusted third party 
should be avoided.}

\item \textbf{External adversaries}.
In contrast, we assume that external adversaries only have black-box~\cite{backdoor_wm_33} access to the released SL model. 
Their sole capability is interacting with the model through its exposed APIs. 
{\color{black}We do not consider white-box access, as it would typically imply server compromise or insider leakage beyond our deployment assumptions.}
Such adversaries may perform model extraction attacks~\cite{model_extract_29}. By querying the API with unlabeled inputs similar to the training data, they can obtain the corresponding labels. The attacker exploits the target SL model as a labeling oracle and subsequently uses the labeled data to train a surrogate model with functionality similar to the original model.

\end{packeditemize}

%% file: sections_1/4_design.tex
\section{Design of \textsc{CliCooper}}\label{sec:design}


\textsc{CliCooper} has five components (Fig.~\ref{framwork}): an SL protocol, secret-mapping label expansion, DP-guarded activations, cooperative training with embedded watermark, and verification.

\begin{figure*}[!t]
        \centering
        \includegraphics[width=0.8\linewidth]{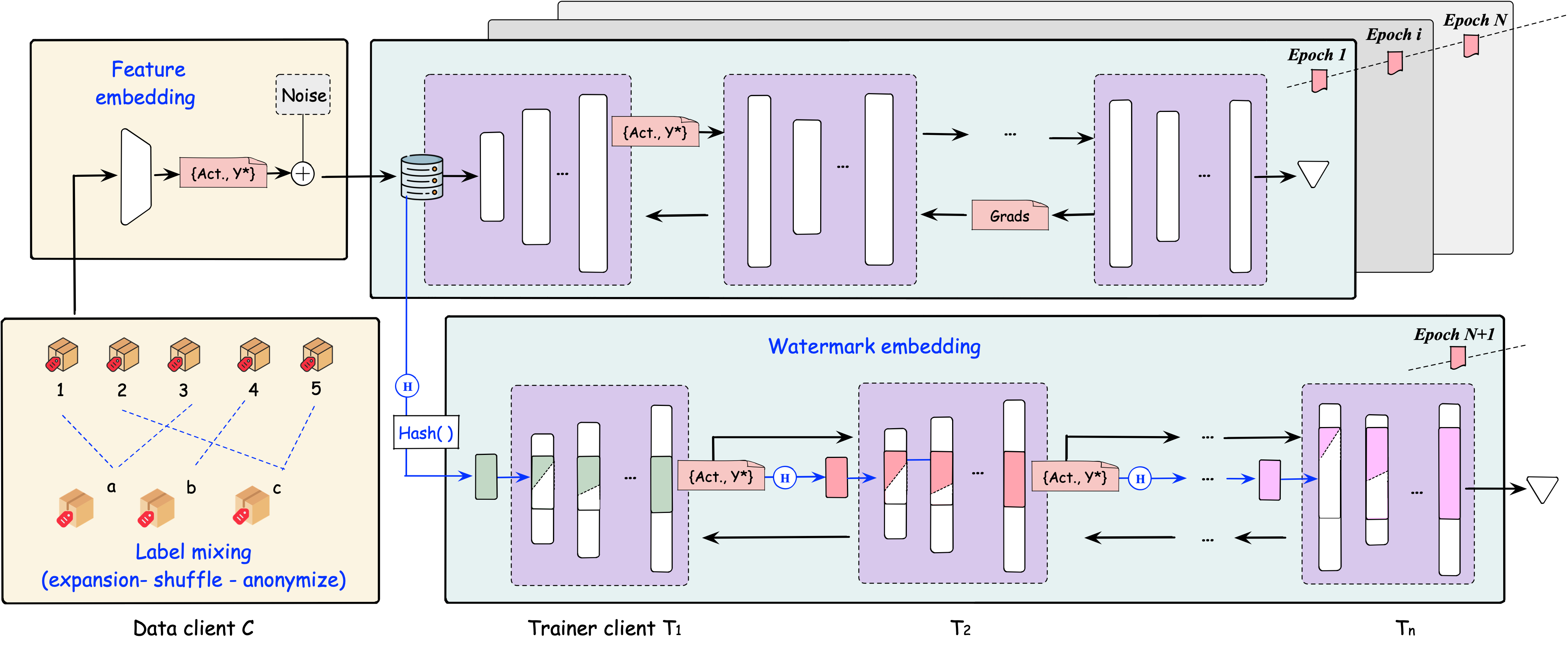}
        \caption{\textbf{\textsc{CliCooper} design:}
        The data client $\mathcal{C}$ holds the dataset but has very limited compute and thus collaborates with multiple trainer clients $\mathcal{T}$ for split learning. First, to disclose the ground-truth labels $Y$, $\mathcal{C}$ expands $Y$ to pseudo-labels $Y^*$ whose quantity and semantics differ from $Y$. Next, $\mathcal{C}$ uses a lightweight encoder to embed features and applies DP to protect activations. $\mathcal{T}$ then train the model using DP-protected activations $\mathbb{M}_{\mathcal{C}}^{\text{\tiny DP}}(W_{\mathcal{C}},\mathbb{D}^*)$ and $Y^*$ for $N$ epochs until convergence. In the $N{+}1$th epoch, each trainer $\mathcal{T}$ embeds a chained watermark into its assigned subnetwork in a pipelined manner, asserting ownership claim for their contributions to access compensation.
        }
        \label{framwork}
\end{figure*}

\subsection{SL Protocol}
\label{subsec: sl_protocol}

We target a marketplace-style collaboration where heterogeneous clients provide training services in exchange for rewards. 
Clients with higher capacity host larger network partitions, while fragmentary devices host smaller segments. 
Unlike conventional SL that delegates downstream training to a single powerful server, \textsc{CliCooper} distributes the “server side’’ over multiple trainer clients, scaling the number and size of segments with model complexity and available compute.

\noindent\textbf{Pre-training negotiation.} Before any data flow, the task publisher (usually $\mathcal{C}$ or $\mathcal{V}$) conducts a light-weight negotiation phase with participating trainers to (i) agree on split points and a strict execution order $\mathcal{T}_1 \!\rightarrow\! \dots \!\rightarrow\! \mathcal{T}_n$, and (ii) size each segment to match the host’s compute/memory budget and the end-to-end latency target. 
{\color{black}Each trainer stores and updates only its own segment parameters locally, and does not share them with peers. }
The agreed plan fixes interface tensor shapes and optimizer/hyper-parameter responsibilities for each segment.

\noindent\textbf{Training flow.} Let $\mathbb{D}$ denote the task dataset held exclusively by $\mathcal{C}$, and $W=\{W_{\mathcal{C}},W_{\mathcal{T}_1},\dots,W_{\mathcal{T}_n} \}$ denote the parameters contributed by $\mathcal{C}$ and $\mathcal{T}_1,\dots, \mathcal{T}_n$. The end-to-end system workflow respects the relay structure as follows~\cite{unsplit_27}:
\begin{equation*}
    \small
    \mathbb{M}(W,\mathbb{D})\!=\!\mathbb{M}_{\mathcal{T}_n}\!\big(W_{\mathcal{T}_n},\,\mathbb{M}_{\mathcal{T}_{n-1}}\!(\dots\! \mathbb{M}_{\mathcal{T}_1}(W_{\mathcal{T}_1},\,\mathbb{M}_{\mathcal{C}}^{\text{\tiny DP}}(W_{\mathcal{C}},\mathbb{D}^*)))\big).
\end{equation*}
Here, only activations $\mathbb{M}(\cdot)$ are relayed downstream, {\color{black} according to bandwidth/latency-related coordination}, while model parameters remain local and opaque to peers. Backpropagation proceeds locally at each trainer over its segment using the received upstream gradients. 
No true labels or raw inputs are disclosed to trainers, as $\mathbb{M}_{\mathcal{C}}^{\text{\tiny DP}}(\cdot)$ is DP-protected activations and $\mathbb{D}^*$ indicates $\mathbb{D}$'variant after an expansion process (\S\ref{subsec: expansion_Gaussian}).

To enable market settlement and dispute resolution, trainers agree to embed a chained watermark lineage during training (\S\ref{subsec: Chained Watermarking}); segments lacking verifiable marks are ineligible.

\subsection{Secret-mapping Expansion and DP-based Protection}
\label{subsec: expansion_Gaussian} 


\textsc{CliCooper} answers \textcolor{violet}{\textbf{RQ1} (data privacy)} and \textcolor{violet}{\textbf{RQ3} (unauthorized-use defense)} by coupling secret-mapping label expansion with DP-guarded activations. The data client $\mathcal{C}$ never discloses true labels; instead, it obfuscates task semantics and label cardinality while protecting raw inputs against inversion attack (cf.~\textbf{Algorithm}~\ref{algo:label_expansion}).

\begin{algorithm}[t]
    \linespread{1.1}
    \footnotesize
    \caption{Label Expansion \& Privacy Protection}
    \begin{algorithmic}[1]
    \Require 
    $\mathbb{D}$ (\textcolor{violet}{original dataset}), 
    $Y$ (\textcolor{violet}{true label}),
    $\mathcal{G}_Y$ (\textcolor{violet}{predefined one-to-many map})),
    $\varepsilon$ (\textcolor{violet}{privacy budget of DP noise}), $\{ g_i\}_{i=1}^{q}$ (\textcolor{violet}{expansion factor for each true label}). 

    \Ensure $\mathbb{M}_{\mathcal{C}}^{\text{\tiny DP}}(W_{\mathcal{C}},\mathbb{D}^*)$ (\textcolor{violet}{DP-protected activations}),
    $Y^*$ (\textcolor{violet}{pseudo-labels}).
    
    \For{$Y_i \in Y$} \Comment{label expansion loop}
        \State $Y_i \mapsto \{Y_{i_1}^*,\dots,Y_{i_{g_i}}^*\}$,
        \Comment{true labels expand to pseudo-labels}
        
        \State $\{(\mathbb{D}_{i},Y_i)\}\ \mapsto\ \{(\mathbb{D}_{i_1}^*,Y^*_{i_1})\!\dots (\mathbb{D}_{i_{g_i}}^*,Y^*_{i_{g_i}})\}$
        \Comment{data-label pairing.}

    \EndFor 

    \State $Y \xrightarrow{\ \mathcal{G}_Y\ } Y^*,\ Y^*\!=\!\{Y_1^*,\dots,Y_{\sum_{i=1}^{q} g_i}^*\}.$ \Comment{shuffle randomly.}

    \State $\{\mathbb{D},Y\}\ \mapsto\ \{\mathbb{D}^*,Y^*\}.$ \Comment{transform accordingly.}

    \State $\tilde{A}=\mathbb{M}_{\mathcal{C}}(W_{\mathcal{C}},\mathbb{D}^*)$.
    \Comment{generate client-side activations.}

    \State $\tilde{A}\ \!\xrightarrow{\ \textsf{Clip}_{S}\ }\ \! \bar{A}$.
    \Comment{enforce an $\ell_{1}$ sensitivity bound.}

    \State $\bar{A}\ \!\xrightarrow{\ \mathbb{G}(\varepsilon)\ }\ \!
\mathbb{M}_{\mathcal{C}}^{\text{\tiny DP}}(W_{\mathcal{C}},\mathbb{D}^*) 
\!=\! \bar{A} + \mathrm{Laplace}\!\left(0,\; \tfrac{\Delta_{1}}{\varepsilon}\right)$.
    \Comment{Laplace-based DP mechanism injection.}
    
    \State \textbf{RETURN} $\{\mathbb{M}_{\mathcal{C}}^{\text{\tiny DP}}(W_{\mathcal{C}},\mathbb{D}^*)$, $Y^*\}$
    \end{algorithmic}
    \label{algo:label_expansion}
    \end{algorithm}

\subsubsection{\underline{Secret-mapping label expansion}}
\label{subsec: label_expansion} 
To hide both label semantics and quantity while preserving trainability, we employ a \textbf{secret-mapping} expansion controlled by the data client $\mathcal{C}$.
Let $Y=\{Y_1,\dots,Y_q\}$ be true labels. Data client $\mathcal{C}$ samples a secret one-to-many map $\mathcal{G}_Y$ and expands each true class $Y_i$ to $g_i$ $(g_i \geq 1)$ pseudo-classes, $Y_i \mapsto \{Y_{i_1}^*,\dots,Y_{i_{g_i}}^*\}$, yielding $Y^*$ with $\sum_{i=1}^{q} g_i$ categories (Lines 1--3):
\begin{equation}
\small
Y \xrightarrow{\ \mathcal{G}_Y\ } Y^*,\ Y^*\!=\!\{Y_1^*,\dots,Y_{\sum_{i=1}^{q} g_i}^*\},
\label{label_mapping}
\end{equation}
where $Y^*=\bigcup_{i=1}^{q}\{Y_{i_1}^*,\dots,Y_{i_{g_i}}^*\}$ is the union of all per-class pseudo-labels. The ordering is randomly shuffled and recorded.
Data-and-label pairs are remapped accordingly (Line 4): 
\begin{equation}
\small
\{(\mathbb{D}_{i},Y_i)\}\ \mapsto\ \{(\mathbb{D}_{i_1}^*,Y^*_{i_1})\!\dots (\mathbb{D}_{i_{g_i}}^*,Y^*_{i_{g_i}})\},
\label{data_mapping}
\end{equation}
where $\{\mathbb{D}_{i_1}^*,\dots,\mathbb{D}_{i_{g_i}}^*\}$ are obtained by partitioning and noise-aware augmentation of $\mathbb{D}_i$~\cite{data_augment, data_augment_text}; class priors and balance remain stable under expansion.
The dataset–label pair becomes
\begin{equation}
\small
\{ (\mathbb{D},Y)\}\ \mapsto\ \{ (\mathbb{D}^*,Y^*)\},
\label{data_mapping}
\end{equation}
where $\mathbb{D}^*$ and $Y^*$ are the expanded, anonymized, and permuted versions of the original dataset and labels, with $\mathbb{D}^*$ transformed accordingly by  $\mathcal{G}_Y$.
Define $\gamma = \frac{\sum_{i=1}^{q} g_i}{q} > 1$.

This expansion does not harm utility for authorized users because training is performed consistently on the expanded label space, so the loss is invariant to a fixed mapping~\cite{yu2023split}.
Authorized parties who legally possess $\mathcal{G}_Y$ perform \textbf{demasking} at the inference with the inverse mapping $\mathcal{G}_Y^{-1}$:
\begin{equation}
\small
\{Y_1^*,\dots,Y_{\sum g_i}^*\}\xrightarrow{\ \mathcal{G}_Y^{-1}\ }\{Y_1,\dots,Y_q\}.
\label{label_demapping}
\end{equation}
This binds model utility to authorization and obfuscates both semantics and quantities.
Without $\mathcal{G}_Y$ or $\mathcal{G}_Y^{-1}$, outputs become operationally degraded for unauthorized users, impeding functional repurposing and correct re-labeling of downstream data.

\subsubsection{\underline{DP-protected activations}}
\label{subsec: Gaussian_protection}
To curb activation inversion and sensitive-attribute inference, $\mathcal{C}$ perturbs smashed activations with calibrated DP noise before upload. 
Let $\tilde{A}=\mathbb{M}_{\mathcal{C}}(W_{\mathcal{C}},\mathbb{D}^*)$ be the client-side activations after the forward pass with the expanded dataset $\mathbb{D}^*$. 
We enforce a $\ell_{1}$ sensitivity bound via \emph{per-layer $\ell_{1}$-clipping} to radius $S$, yielding $\Delta_{1}\!\le\!2S$ for neighboring inputs.
We add zero-mean Laplace\footnote{\color{black}Our framework is mechanism-agnostic, Gaussian ($\varepsilon$,$\delta $)-DP can be 
replaced to $\ell_{1}$ clipping + Gaussian calibration, if one prefers approximate DP.} noise to $\Delta_{1}$ and the target privacy budget $\varepsilon$ (Lines 5--7):
\begin{equation*}
\small
\tilde{A}\ \!\xrightarrow{\ \textsf{Clip}_{S}\ }\ \! \bar{A}
\text{, and }
\bar{A}\ \!\xrightarrow{\ \mathbb{G}(\varepsilon)\ }\ \!
\mathbb{M}_{\mathcal{C}}^{\text{\tiny DP}}(W_{\mathcal{C}},\mathbb{D}^*) 
\!=\! \bar{A} + \mathrm{Laplace}\!\left(0,\; \tfrac{\Delta_{1}}{\varepsilon}\right),
\end{equation*}
where $\mathbb{G}(\varepsilon)$ denotes a randomized Laplace mechanism~\cite{phan2017adaptive} with privacy budget $\varepsilon$. 

\begin{thm}[DP-protected activations]
\label{thm:dp_protected_activations}
Let $\bar A(x)=\textsf{Clip}^{(1)}_{S}\!\big(\mathbb{M}_{\mathcal{C}}(W_{\mathcal{C}},x)\big)\in\mathbb{R}^d$ be the per-sample activation after per-layer $\ell_{1}$-clipping with radius $S$. The $\ell_{1}$-sensitivity satisfies
$\Delta_{1}=\sup_{x\sim x'}\|\bar A(x)-\bar A(x')\|_{1}\le 2S$.
Consider the Laplace mechanism $
\mathcal{M}_{\varepsilon}(x)=\bar A(x)+Z, Z\stackrel{\text{i.i.d.}}{\sim}\mathrm{Laplace}\!\left(0,\tfrac{\Delta_{1}}{\varepsilon}\right),$
yielding pure $\varepsilon$-DP. For any fixed DP activation $y_a=\mathcal{M}_{\varepsilon}(x_a)$, any two DP activations $y_b,y_c$, and candidate class $t$,
\begin{equation}
\frac{\Pr\big[t\ \text{gen} (y_a \wedge y_b)\big]}
     {\Pr\big[t\ \text{gen} (y_a \wedge y_c)\big]}
\;\le\; e^{2\varepsilon},
\label{eq:pair_ratio_dp}
\end{equation}
where $\,t\ \text{generated }(y_a \wedge y_b)\,$ denotes the event that both DP activations $y_a$ and $y_b$ were produced from inputs whose (hidden) true class is $t$.
\end{thm}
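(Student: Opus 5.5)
We read $\Pr[t\ \text{gen}(y_a\wedge y_b)]$ as the adversary's posterior probability, given the observed pair $(y_a,y_b)$, that both underlying inputs have hidden class $t$. The strategy is to factor this posterior through the per-sample Laplace mechanism $\mathcal{M}_\varepsilon$. Then the pure $\varepsilon$-DP guarantee bounds how much any single observation can move the posterior odds of class $t$. The common factor contributed by $y_a$ cancels, leaving only a comparison of $y_b$ against $y_c$.

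\textbf{Step 1 (DP for all input pairs).} First we establish $\Delta_1\le 2S$. Per-layer $\ell_1$-clipping gives $\|\bar A(x)\|_1\le S$ for every $x$, so the triangle inequality yields $\|\bar A(x)-\bar A(x')\|_1\le 2S$ for \emph{all} $x,x'$, not only neighbours. With density $p(y\mid x)\propto\exp(-\varepsilon\|y-\bar A(x)\|_1/\Delta_1)$, the reverse triangle inequality gives, for every $y$ and all $x,x'$,
\[
e^{-\varepsilon}\le \frac{p(y\mid x)}{p(y\mid x')}\le e^{\varepsilon}.
\]
This holds uniformly over inputs precisely because the per-sample sensitivity is a global bound.

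\textbf{Step 2 (factorisation).} We assume, as in the relay model, that samples are drawn independently from a prior $\pi$ over inputs and are perturbed with independent Laplace noise. The joint likelihood is then $p(y_a\mid x_a)\,p(y_b\mid x_b)$, so the posterior factorises:
\[
\Pr[t\ \text{gen}(y_a\wedge y_b)]=P(t\mid y_a)\,P(t\mid y_b).
\]
The ratio in \eqref{eq:pair_ratio_dp} therefore reduces to $P(t\mid y_b)/P(t\mid y_c)$, and $y_a$ plays no further role.

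\textbf{Step 3 (odds bound).} Write
\[
P(t\mid y)=\frac{A_y}{A_y+B_y},\qquad A_y=\sum_{x\in t}\pi(x)p(y\mid x),\qquad B_y=\sum_{x\notin t}\pi(x)p(y\mid x).
\]
Fix any reference input $x_0$ and divide $A_y$ and $B_y$ by $p(y\mid x_0)$. By Step~1, the average normalised likelihood over class $t$ and the one over its complement each lie in $[e^{-\varepsilon},e^{\varepsilon}]$. The sharper fact we need is that their ratio lies in $[e^{-\varepsilon},e^{\varepsilon}]$, because any two likelihoods at the same $y$ differ by at most $e^{\varepsilon}$. Hence the posterior odds $A_y/B_y$ lie within $[e^{-\varepsilon}o,\,e^{\varepsilon}o]$, where $o=\pi(t)/(1-\pi(t))$ is the prior odds.

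\textbf{Step 4 (optimise).} Since $u\mapsto 1/(1+1/u)$ is increasing, the ratio is maximised when $y_b$ attains the upper odds and $y_c$ the lower odds:
\[
\frac{P(t\mid y_b)}{P(t\mid y_c)}\le\frac{1+e^{\varepsilon}/o}{1+e^{-\varepsilon}/o}=\frac{o+e^{\varepsilon}}{o+e^{-\varepsilon}}\le e^{2\varepsilon}.
\]
The last inequality holds because $o+e^{\varepsilon}\le e^{2\varepsilon}(o+e^{-\varepsilon})=e^{2\varepsilon}o+e^{\varepsilon}$ for $o\ge 0$. This proves \eqref{eq:pair_ratio_dp}.

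The main obstacle is Step~3, i.e.\ going through posterior odds rather than bounding $A_y$ and $B_y$ separately. Bounding numerator and denominator separately would give only $e^{4\varepsilon}$. A secondary obstacle is making precise the reading of ``gen'' together with the independence assumption in Step~2. If the samples were correlated, I would instead condition on $x_a$ and apply Step~3 to the conditional prior of $x_b$ given $x_a$; the same bound follows.
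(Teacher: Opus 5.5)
Your proof is correct under the reading you make explicit (a posterior over hidden classes, with samples drawn independently). It differs from the paper's argument in where the DP inequality is applied.

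The paper works with the unnormalised expression $\sum_{x\in t}\Pr[y_a\mid x]\Pr[y_b\mid x]\Pr[x]$, with a single latent $x$ behind both observations. It cancels the $y_a$ factors by invoking $\Pr[y_b\mid x]\le e^{\varepsilon}\Pr[y_a\mid x]$ and $\Pr[y_c\mid x]\ge e^{-\varepsilon}\Pr[y_a\mid x]$, i.e.\ a bound between two \emph{different outputs} under the \emph{same} input. That gives a one-line cancellation, but it is not what $\varepsilon$-DP supplies, since DP compares a single output across inputs. With Laplace noise, $p(y_b\mid x)/p(y_a\mid x)$ is unbounded as $y_b$ moves away from the clipping ball. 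So under that unnormalised reading, the ratio in the statement is itself unbounded.

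You apply DP in the correct direction: fixed $y$, arbitrary $x,x'$. This is legitimate because a per-sample release treats any two inputs as neighbours, so $\Delta_1$ already dominates $\|\bar A(x)-\bar A(x')\|_1$ for every pair. The normalisation of the posterior is what turns the per-$y$ likelihood spread $e^{\varepsilon}$ into a bound on $P(t\mid y)$. The price is that you must commit to the posterior reading and to an independence (or conditional) model, both of which the statement leaves implicit.

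Two small remarks. First, the odds detour in Steps 3--4 is unnecessary. Both $A_y/\pi(t)$ and the evidence $\sum_x\pi(x)p(y\mid x)$ lie in $[\inf_x p(y\mid x),\sup_x p(y\mid x)]$, an interval of multiplicative width at most $e^{\varepsilon}$. Hence $P(t\mid y)\in[e^{-\varepsilon}\pi(t),e^{\varepsilon}\pi(t)]$, and the $e^{2\varepsilon}$ bound follows at once. Second, in your correlated-sample remark the posterior weights on $x_a$ also depend on $y_b$, so conditioning on $x_a$ alone does not finish the argument. Instead, bound the inner sums over $x_b$ (the class-$t$ part and the total) by the same $\inf$ and $\sup$ of $p(y_b\mid\cdot)$. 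This places the pair posterior within a factor $e^{\varepsilon}$ of a quantity independent of $y_b$.
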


\begin{prf} The per-coordinate Laplace noise with scale $\Delta_{1}/\varepsilon$ ensures that for any input $x$ and any observation $y$, the likelihood ratio is bounded
$e^{-\varepsilon}\le \frac{\Pr[y\,|\,x]}{\Pr[y\,|\,x']}\le e^{\varepsilon}$,
$\forall x\!\sim \!x'$, according to the principle of the standard Laplace $\varepsilon$-DP~\cite{phan2017adaptive}.
Fix a candidate class $t$ and write the joint ``same-class'' event under a product model:
$\Pr[t\text{ gen }(y_a\wedge y_b)]=\sum_{x\in t}\Pr[y_a|x]\Pr[y_b|x]\Pr[x]$,
and similarly for $(y_a\wedge y_c)$:
$\Pr[t\text{ gen }(y_a\wedge y_c)]=\sum_{x\in t}\Pr[y_a|x]\Pr[y_c|x]\Pr[x]$.
Then, 
\begin{align}
\frac{\Pr[t\text{ gen }(y_a\wedge y_b)]}{\Pr[t\text{ gen }(y_a\wedge y_c)]}
&=
\frac{\sum_{x\in t}\Pr[y_a|x]\Pr[y_b|x]\Pr[x]}
     {\sum_{x\in t}\Pr[y_a|x]\Pr[y_c|x]\Pr[x]}
\nonumber\\
&\le
\frac{\sum_{x\in t}\Pr[y_a|x]\,(e^{\varepsilon}\Pr[y_a|x])\Pr[x]}
     {\sum_{x\in t}\Pr[y_a|x]\,(e^{-\varepsilon}\Pr[y_a|x])\Pr[x]}
\nonumber\\
&= e^{2\varepsilon},
\end{align}
where the $\varepsilon$-DP
bound 
on $\Pr[y_b|x]/\Pr[y_a|x]$ in the numerator and $\Pr[y_c|x]/\Pr[y_a|x]$ in the denominator yields  $e^{\varepsilon}$ and $e^{-\varepsilon}$, respectively.
Hence, \eqref{eq:pair_ratio_dp} holds.
\qedhere
\end{prf}

As demonstrated in Theorem~\ref{thm:dp_protected_activations}, under $\ell_{1}$-clipping with Laplace noise, no observer can improve the odds that two DP activations $(y_a,y_b)$ came from the same candidate class $t$ relative to $(y_a,y_c)$ by more than $e^{2\varepsilon}$. This caps pairwise linkage attacks and underpins the secrecy of class membership.

Given $\mathcal{C}$’s limited compute and to avoid cumulative privacy loss from repeated composition, both label-space expansion and DP perturbation are performed once before collaborative training. That is, $\mathcal{C}$ expands labels, computes a single forward pass over $\mathbb{D}^*$ to produce $\tilde{A}$, applies clipping and the Laplace mechanism to obtain $\mathbb{M}_{\mathcal{C}}^{\text{\tiny DP}}$, and caches these DP-guarded activations for the further training rounds.
In the subsequent pipeline, $\mathcal{C}$ \emph{sends only} $\mathbb{M}_{\mathcal{C}}^{\text{\tiny DP}}$ to the first trainer $\mathcal{T}_1$ and the pseudo-labels $Y^*$ to the last trainer $\mathcal{T}_n$, keeping raw inputs and true labels remain undisclosed.

\subsection{Training with Chained Watermarking}
\label{subsec: Chained Watermarking}

Since trainers $\mathcal{T}$ are only partially trusted, they must demonstrate that they actually performed training and that they own the layers they optimized—rather than submitting a ``free-lunch'' or open-source pretrained model to claim rewards. \textsc{CliCooper} answers \textcolor{violet}{\textbf{RQ2} (layer ownership)} by enforcing pipeline training with chained watermarks embedded into the $\mathcal{T}$-side submodels (cf. \textbf{Algorithm}~\ref{algo:watermark_embedding}). 
Once a trainer $\mathcal{T}_i$ brings its submodel to convergence, it must embed a watermark $\Lambda_{\mathcal{T}_i}$ to attest ownership; crucially, $\Lambda_{\mathcal{T}_i}$ is not chosen by $\mathcal{T}_i$ but is deterministically derived from the predecessor’s output activation via a collision-resistant hash. By seeding each watermark with 
$\Lambda_{\mathcal{T}_i}{\leftarrow} \mathbb{H} (\cdot)$ ($\mathbb{H} (\cdot)$ is a hash function to generate a watermark), the scheme prevents precomputation or substitution attacks; a trainer cannot cheat without executing the prescribed protocol.

\subsubsection{\underline{Pipeline training}}
Before the cooperative training, the data client $\mathcal{C}$ processes the raw dataset with a pre-trained split encoder and safeguards the resulting activations via a DP mechanism, producing DP-protected activations $\mathbb{M}_{\mathcal{C}}^{\text{\tiny DP}}$. Once $\mathcal{C}$ uploads the activation 
$\mathbb{M}_{\mathcal{C}}^{\text{\tiny DP}}$ to $\mathcal{T}_1$ and the pseudo-labels $Y^*$ to $\mathcal{T}_n$, it ceases participation in the subsequent pipeline training. The trainers $\mathcal{T} = \{\mathcal{T}_1, \dots,\mathcal{T}_n\}$ then execute a predefined SL protocol, proceeding sequentially in a pipeline (Lines 1--3).

During each training epoch, $\mathcal{T}_1$ consistently consumes the $\mathcal{C}$-provided activation 
$\mathbb{M}_{\mathcal{C}}^{\text{\tiny DP}}$ as input and forwards its output to $\mathcal{T}_2$. Each intermediate trainer $\mathcal{T}_i$ for $i=2,3..n-1$ takes as input the activation produced by $\mathcal{T}_{i-1}$, performs its designated computation, and passes the resulting activation to $\mathcal{T}_{i+1}$. Upon receiving the activation from $\mathcal{T}_{n-1}$, 
$\mathcal{T}_n$ combines it with the pseudo-labels 
$Y^*$ supplied by the client $\mathcal{C}$ to compute the loss, perform gradient descent, and backpropagate gradients upstream through 
$\mathcal{T}_{n-1}, \mathcal{T}_{n-2}...\mathcal{T}_{1}$. The trainers repeat this process for $N$ rounds until the SL model converges.

{\color{black}Watermark embedding is performed after the pipeline training converges: after $N$ epochs of standard SL optimization, each trainer embeds its watermark in the ($N+1$)-th epoch via short regularized watermarking that is monitored until the watermark detection rate reaches the threshold. Therefore, watermarking does not affect the gradient flow or convergence of the main task training in the first $N$ epochs. Model accuracy with watermarking remains close to the baseline across different watermark lengths. In addition, embedding/verification overhead is negligible compared to training time.}


\subsubsection{\underline{Watermark generation}}
\label{subsec: Watermark generation}

\begin{algorithm}[t]
    \linespread{1.1}
    \footnotesize
    \caption{Training \& Chained  Watermarking}
    \begin{algorithmic}[1]
    \Require $\mathbb{M}_{\mathcal{C}}^{\text{\tiny DP}}(W_{\mathcal{C}},\mathbb{D}^*)$ (\textcolor{violet}{DP-protected activation}), 
    $W_{\mathcal{C}}$ (\textcolor{violet}{the pretrained model of $\mathcal{C}$}),
    $ID_{\mathcal {T}_i}$ (\textcolor{violet}{$\mathcal {T}_i$ node identity}), 
    $\mu_i$ (\textcolor{violet}{secret nonce}),
    $\eta_G$ (\textcolor{violet}{threshold for successful watermark embedding}), 
    $l_w$ (\textcolor{violet}{main task loss function}),
    $l_\Lambda$ (\textcolor{violet}{watermark embedding regularizers}).
    
    \Ensure $W=\{W_{\mathcal{C}},W_{\mathcal{T}_1},\dots,W_{\mathcal{T}_n} \}$ (\textcolor{violet}{SL model with watermark}).

    \While{ ($\lnot$ converging)} \Comment{until convergence}

        \State $W_{\mathcal{C}}\overset{\mathbb{M}_{\mathcal{C}}^{\text{\tiny DP}}}{\rightarrow}  W_{\mathcal{T}_1}\overset{\mathbb{M}_{\mathcal{T}_1}}{\rightarrow}\dots \overset{\mathbb{M}_{\mathcal{T}_{n-1}}}{\rightarrow}W_{\mathcal{T}_n} $ 
        \Comment{pipeline training.}
        
    \EndWhile
    
    \State $\mathbb{M}_{\mathcal{C}}^{\text{\tiny DP}}$, $\mu_i$, $i$ \Comment{initial activation, and trainer index.}

     \For{$i=1,2\dots n$} \Comment{trainer's watermark embedding loop}

        \State $Z_{\mathcal {T}_i} = \textsf{WMPosition}(\mu_i)$
        \Comment{select matrix for weights used for $\Lambda_{\mathcal {T}_i}$}

        \State $\Lambda_{\mathcal{T}_i}, k_{\mathcal{T}_i} = \textsf{WMGen}(\mathcal{H}_i), \textsf{KeyGen}(\mathcal{H}_i)$ where 
            \If{$i=1$}

            \State  $\mathcal{H}_1=\mathbb{H}(\mathbb{M}_{\mathcal{C}}^{\text{\tiny DP}}, 1, \mu_1, ID_{\mathcal{T}_1})$
            \Else
           \State $\mathcal{H}_i = \mathbb{H}(\!\mathbb{M}_{\mathcal{T}_{i-1}},\;i ,\; \mu_i,\; ID_{\mathcal{T}_{i}})$
        \EndIf \Comment{calculate trainer's watermark}
     
        \While{$\eta < \eta_G$} \Comment{until watermark convergence}

        \State $W_{\mathcal{T}_i} = \arg\min_{W} \left( l_w(W) + \lambda l_\Lambda(W) \right)$ \Comment{$\Lambda_{\mathcal {T}_i}$ embedding}

            \State $\eta = 1-\frac{ \sum_{b}^{B} (\mathbb{O}(W_{\mathcal{T}_i},Z_{\mathcal{T}_i},k_{\mathcal{T}_i}
    )[j]\ne \Lambda_{\mathcal{T}_i}[j])}{B},$ \Comment{update $\eta$}

        \If{$\eta \geq \eta_G$}

            \State  Save $W_{\mathcal{T}_i}$
            
        \EndIf
        
        \EndWhile

        \State $i \ += 1$
    \EndFor

    \State \textbf{RETURN} $W=\{W_{\mathcal{C}},W_{\mathcal{T}_1},\dots,W_{\mathcal{T}_n} \}$
    \end{algorithmic}
    \label{algo:watermark_embedding}
    \end{algorithm}
  
Watermark embedding follows pipeline training and proceeds first with the generation of the watermark.
Watermark generation begins with the task publisher (which may be either $\mathcal {C}$ or the $\mathcal {V}$) assigning each $\mathcal {T}_i$ node a fixed nonce $\mu_i$, which $\mathcal {T}_i$ use as the basis for generating the watermark $\Lambda_{\mathcal{T}_i}$, the embedding key $k_{\mathcal{T}_i}$, and the position matrix $Z_{\mathcal{T}_i}$ that specifies where $\Lambda_{\mathcal{T}_i}$ is embedded.
{\color{black}Concretely, trainer $\mathcal{T}_i$ embeds the watermark only into the parameters of its own assigned layers. We first flatten the trainable tensors in $\mathcal{T}_i$’s segment into a 1-D vector and then use $Z_{\mathcal{T}_i}$ as a selection mask to pick the corresponding weight entries for watermark embedding.}
$Z_{\mathcal{T}_i}$ is generated solely from the fixed seed $\mu_i$, whereas 
$\Lambda_{\mathcal{T}_i}$ and $k_{\mathcal{T}_i}$ are derived from the predecessor’s output jointly with 
$k_{\mathcal{T}_i}$ and the trainer identity $ID_{\mathcal{T}_i}$ (Lines 4--12):
\begin{equation}
\begin{aligned}
Z_{\mathcal{T}_i} &= \textsf{WMPosition}(\mu_i), i=1,2...n \\
\Lambda_{\mathcal{T}_i},\; k_{\mathcal{T}_i} &= \textsf{WMGen}(\mathcal{H}_i),\; \textsf{KeyGen}(\mathcal{H}_i) \text{ with} \\
\mathcal{H}_i &= \mathbb{H}(\!\mathbb{M}_{\mathcal{T}_{i-1}},\;i ,\; \mu_i,\; ID_{\mathcal{T}_{i}}), i=2,3..n, \\
 \text{while} \
\mathcal{H}_1 &= \mathbb{H}(\mathbb{M}_{\mathcal{C}}^{\text{\tiny DP}},\;1 ,\; \mu_1,\; ID_{\mathcal{T}_1}), i=1,
\end{aligned}
\label{wm_k_gen}
\end{equation}
where $\textsf{WMPosition}(\cdot)$, $\textsf{WMGen}(\cdot)$, and $\textsf{KeyGen}(\cdot)$ are deterministic functions, with all randomness derived from the publisher-fixed nonce $\mu_i$ and $\mathcal {T}_i$ identity $ID_{\mathcal{T}_i}$. This ensures only the legitimate $\mathcal {T}_i$ can generate valid $\Lambda_{\mathcal{T}_i}$ and $k_{\mathcal{T}_i}$, while the $\mathcal {T}$'s index $i$ enforces sequential linkage.
This chained derivation of the watermark and embedding keys thwarts precomputation: a trainer cannot anticipate ($\Lambda_{\mathcal{T}_i}$, $k_{\mathcal{T}_i}$) before observing the predecessor’s activation, nor embed a valid watermark into an unauthorized model in advance.

\subsubsection{\underline{Watermark embedding}} 
\label{subsec: Watermark embedding}
Since $\mathcal {T}$ possesses no local data, it is unable to embed backdoor-based watermarks~\cite{backdoor_wm_33}. Instead, watermarking approaches that are integrated into the training process by modifying model features (e.g., EWDNN~\cite{uchida2017embedding_18}, RIGA~\cite{wang2021riga_19}, FedIPR~\cite{li2022fedipr_20}) represent the most suitable schemes for watermark embedding in this setting.

After $N$ epoch training, the model converges. In the $(N+1)$-th epoch, the unique watermark $\Lambda_{\mathcal{T}_i}$ is embedded into the weights $W_{\mathcal{T}_i}$ according to the matrix $Z_{\mathcal{T}_i}$, gradually forming the weights of the final epoch, where $\Lambda_i$ has been successfully embedded.
This embedding process in multiple rounds in the $(N+1)$th epoch can be written as (Lines 13--14):
\begin{equation}
W_{\mathcal{T}_i} = \arg\min_{W} \left( l_w(W) + \lambda l_\Lambda(W) \right) 
,
\end{equation}
where $l_w$ is the main task loss function, $l_\Lambda$ is the watermark embedding regularizer, and $\lambda$ is a hyper-parameter that balances the main task training and watermark embedding.

While embedding the watermark $\Lambda_{\mathcal{T}_i}$, the process is actively monitored in each round of the $(N+1)$th epoch. Simultaneously, the watermark is extracted from $W_{\mathcal{T}_i}$ and compared to the expected watermark. The watermark detection rate $\eta$ is calculated using the Hamming distance (Line 15): 
\begin{equation}
    \eta = 1-\frac{ \sum_{b}^{B} (\mathbb{O}(W_{\mathcal{T}_i},Z_{\mathcal{T}_i},k_{\mathcal{T}_i}
    )[j]\ne \Lambda_{\mathcal{T}_i}[j])}{B}, 
\end{equation}
where $\mathbb{O}(\cdot)$ is the watermark extraction function, and $B$ is the size of $\Lambda_{\mathcal{T}_i}$. Once $\eta$ reaches or exceeds the predefined threshold $\eta_G$ (i.e., $\eta\geq \eta_G$), the embedding process for the current trainer $\mathcal{T}_i$ is considered complete. 
Then $\mathcal{T}_i$ freezes and saves its checkpoint (Lines 16--17) and outputs activation $\!\mathbb{M}_{\mathcal{T}_i}$ to $\mathcal{T}_{i+1}$, who will follow the same watermark generation and embedding process as for the $\mathcal{T}_i$-th trainer.


{\color{black}The chained watermark is sequential across trainers by design: each trainer’s watermark/key is deterministically derived from the predecessor’s output activation, together with a publisher-fixed nonce and identity, and the trainer index enforces sequential linkage.} This iterative process continues until all the trainers finish watermark embedding (Lines 5--21).
It ensures that each watermark is uniquely bound to its predecessor trainer's output, thereby preventing $\mathcal {T}$ from fabricating a valid contribution beforehand to access compensation.

\subsection{Chained Watermark Verification}
\label{subsec: Watermark verification}
Prior to verification, client $\mathcal{C}$ uploads the pretrained client-side model $W_{\mathcal{C}}$ together with $\mathbb{M}_{\mathcal{C}}^{\text{\tiny DP}}$ to the verifier, while each trainer $\mathcal{T}_i$ uploads its checkpoints $W_{\mathcal{T}_i}$. The verifier $\mathcal{T}$ assembles the complete SL model $W$ by merging $W_{\mathcal{T}_i}$ and, using $W_{\mathcal{T}_i}$ to calculate $\mathbb{M}_{\mathcal{T}_i}$ for subsequent watermark generation. The verification procedure consists of two stages: model performance verification and chained watermark verification.

\begin{algorithm}[t]
    \linespread{1.1}
    \footnotesize
    \caption{Watermark Verification}
    \begin{algorithmic}[1]
    \Require $W=\{W_{\mathcal{C}},W_{\mathcal{T}_1},\dots,W_{\mathcal{T}_n} \}$ (\textcolor{violet}{SL model with watermark}), 
    $\mathbb{M}_{\mathcal{C}}^{\text{\tiny DP}}(W_{\mathcal{C}},\mathbb{D}^*)$ (\textcolor{violet}{DP-protected activation}),
    $ID_{\mathcal{T}_i}$ (\textcolor{violet}{$\mathcal{T}_i$ node identity}), 
    $\mu_i$ (\textcolor{violet}{secret nonce}), 
    $\mathbb{D}_t $ (\textcolor{violet}{test dataset}), $\eta_{G}$(\textcolor{violet}{threshold of similarity between the hashed and extracted watermarks}). 
    \Ensure ``\textbf{Fail}'' or ``\textbf{Success}''.
    

    \State $\mathcal{V}$ receives $W=\{W_{\mathcal{C}},W_{\mathcal{T}_1},\dots,W_{\mathcal{T}_n} \}$, $ID_{\mathcal{T}_i}$ from $\mathcal{T}$ and $\mathcal{C}$.

    \State $\hat{{Acc}}_{{main}}=\textsf{Test}(W, \mathbb{D}_t)$ \Comment{test the model main task accuracy} 

    \State \textbf{if} $\hat{Acc}_{main}$ not meeting expectations, \textbf{then}
        \State \hspace{4mm} \textbf{RETURN} ``\textbf{Fail}'', \textbf{break}

    
    \For{$i=1,2\dots n$} \Comment{verification loop}

        \State $Z_{\mathcal {T}_i} = \textsf{WMPosition}(\mu_i)$. \Comment{recover the matrix for $\Lambda_{\mathcal{T}_i}$}
        
        \State $\Lambda_{\mathcal{T}_i}, k_{\mathcal{T}_i} = \textsf{WMGen}(\mathcal{H}_i), \textsf{KeyGen}(\mathcal{H}_i)$ where 
            \If{$i=1$}

            \State  $\mathcal{H}_1=\mathbb{H}(\mathbb{M}_{\mathcal{C}}^{\text{\tiny DP}}, 1, \mu_1, ID_{\mathcal{T}_1})$
            \Else
           \State $\mathcal{H}_i = \mathbb{H}(\!\mathbb{M}_{\mathcal{T}_{i-1}},\;i ,\; \mu_i,\; ID_{\mathcal{T}_{i}})$
        \EndIf
             \Comment{calculate the watermark from the previous model by $\mathbb{H}(\cdot)$}
            
        \State $\hat{\Lambda}_{\mathcal{T}_i} = \mathbb{O}(W_{\mathcal{T}_i},Z_{\mathcal{T}_i},k_{\mathcal{T}_i})$ \Comment{extract the current watermark}
        
        \State $\eta_{\mathcal{T}_i} = 1-\frac{ \sum_{b}^{B} (\hat{\Lambda}_{\mathcal{T}_i}[j]\ne \Lambda_{\mathcal{T}_i}[j])}{B}$ \Comment{calculate the $\Lambda_{\mathcal{T}_i}$ similarity}
        
        \State \textbf{if} $\eta_{\mathcal{T}_i} < \eta_G$ \textbf{then}
        \State \hspace{4mm} \textbf{RETURN} ``\textbf{Fail}'', \textbf{break}
    \EndFor 
    \State \textbf{RETURN} ``\textbf{Success}''
    \end{algorithmic}
    \label{algo:watermark_verification}
    \end{algorithm}

The process begins with the verifier $\mathcal {V}$ evaluating the checkpoint $W$ (Lines 1--4). This step ensures that the SL model achieves an accuracy level above a predefined threshold, which is determined based on empirical values commonly adopted in ML marketplaces~\cite{ml_marketplace_34}. If the accuracy of $W$ falls below this threshold, $\mathcal {V}$ rejects $\mathcal {T}$’s training as invalid.
Once the SL model $W$ passes the performance evaluation,  $\mathcal {V}$ proceeds to chained watermark verification (Lines 5--17). Using~\eqref{wm_k_gen},  $\mathcal {V}$ employs the activation $\mathbb{M}_{\mathcal{T}_i}$ of the model $W_{\mathcal{T}_i}$ to regenerate the ground-truth watermark $\Lambda_i$, the embedding key $ k_i$, and the position matrix $Z_{\mathcal{T}_i}$ required for verifying the checkpoint $W_{\mathcal{T}_i}$ (Lines 6--12). 
The ground-truth watermark for $\mathcal {T}_1$ is computed by feeding $\mathbb{M}_{\mathcal{C}}^{\text{\tiny DP}}$ into~\eqref{wm_k_gen}.
This step ensures $\Lambda_{\mathcal{T}_i}$ is legitimately derived from $\mathbb{M}_{\mathcal{T}_i}$ of the model $W_{\mathcal{T}_i}$. Then, $\mathcal{V}$ uses $Z_{\mathcal{T}_i}$ and $k_{\mathcal{T}_i}$ on the checkpoint model $W_{\mathcal{T}_i}$ to extract the embedded watermark $\hat{\Lambda}_{\mathcal{T}_i}$ (Line 13):
    $\hat{\Lambda}_{\mathcal{T}_i} = \mathbb{O}(W_{\mathcal{T}_{\mathcal{T}_i}},Z_{\mathcal{T}_i},k_{\mathcal{T}_i})$. 

Given a predefined watermark detection threshold $\eta_G$, the verification for the checkpoint $W_{\mathcal{T}_i}$ is considered successful if the similarity between the extracted watermark $\hat{\Lambda}_i$ and the ground-truth watermark $\Lambda_i$ exceeds $\eta_G$. The watermark detection rate of $\hat{\Lambda}_i$ can be formally expressed as (Line 14):
\begin{equation}
    \eta_{\mathcal{T}_i} = 1-\frac{ \sum_{b}^{B} (\hat{\Lambda}_{\mathcal{T}_i}[j]\ne \Lambda_{\mathcal{T}_i}[j])}{B}.
\end{equation}
Hence, $\mathcal {V}$ continues the chained watermark validation process $W_{\mathcal{T}_i}$ by $W_{\mathcal{T}_i}$, checking the extracted watermark detection rate. Once all $W_{\mathcal{T}_i}$ have been successfully verified, the SL model $W$ is deemed to have correctly completed the training process.

\subsection{Security Analysis}
We analyze how \textsc{CliCooper} addresses the identified threats under our serverless, partially trusted, multi-client setting. Internally, rational honest-but-curious trainers seek to infer information from activations or to claim undue credit; externally, adversaries interact only via black-box APIs and attempt unauthorized misuse, e.g., extraction/distillation.

\subsubsection{Threats from internal users (trainers)}
Each trainer $\mathcal{T}_i$ can inspect its parameters and activations it receives, but never observes $\mathcal{C}$’s raw inputs or true labels, peers’ parameters, or the secret label mapping. The principal risks are curiosity-driven inference on activations, including clustering, inversion, membership/attribute/label inference, and credit/ownership disputes, rather than protocol sabotage. 

\begin{packeditemize}
\item \textbf{Privacy-gated label space \& DP activation boundary.}
Trainers operate solely in the pseudo-label space $Y^*$, which removes direct semantic interpretability and decouples observable class frequencies from the true task. At the client boundary, Laplace$(\Delta_1/\varepsilon)$ noise yields a $\varepsilon$-DP interface for smashed activations, limiting linkage between any pseudo-label pairs and restricting reconstruction power.

\item \textbf{Segment opacity \& cross-segment inference hardness.}
Each trainer retains a private model segment and relays only upstream activations; the internal white-box view is local. Without access to peers’ parameters, end-to-end Jacobians, or backward signals beyond the split, standard cross-segment attacks, e.g., reconstructing upstream features or attributing peer capacity/architecture, become ill-posed: the mapping from a single-segment view to others’ weights is non-identifiable, gradient information is locally confined, and multiple vantage points required for triangulation are unavailable by design. This opacity complements the DP boundary, suppressing clustering and inversion reliability.

\item \textbf{Chained watermark lineage for fair attribution and ownership.}
A trainer that reuses a pre-trained/open-source fragment across different tasks could claim rewards without genuine participation. \textsc{CliCooper} prevents such ``free-lunch segment replay'' by chaining watermarks across trainer $\mathcal{T}_{i}$, where each mark $\Lambda_{\mathcal{T}_{i}}$ is derived from $(\!\mathbb{M}_{\mathcal{T}_{i-1}},\;i ,\; \mu_i,\; ID_{\mathcal{T}_{i}})$, binding the segment to the predecessor checkpoint, the run chronology, and the intended task. The fully trusted verifier $\mathcal{V}$ with omniscient audit access reconstructs expected marks and validates detection and utility, establishing segment-level provenance, deterring free-riding, and resolving credit disputes without exposing peers’ parameters.
\end{packeditemize}

\subsubsection{Threats from external adversaries (black-box only)}
Outsiders can query the related model via APIs with no access to architecture or weights, so the primary threat is unauthorized reuse through extraction or functional repurposing.

In \textsc{CliCooper}, outputs reside in a pseudo-label space and are operationally degraded without the secret inverse map $\mathcal{G}_Y^{-1}$, as a black-box extractor or distilled surrogate can at best replicate pseudo semantics, which do not translate to the true task without authorization. This breaks reliable relabeling and task remapping. Recovering $\mathcal{G}_Y^{-1}$ from queries amounts to solving a combinatorial preimage problem under label expansion, further obfuscated by activation-level DP noise that induces stochasticity in outputs and elevates the Bayes risk of any student trained on them.

The black-box release also blocks white-box watermark attacks, including removal, overwriting, pruning, and fine-tuning. Moreover, chained watermarks in \textsc{CliCooper} are derived from hidden checkpoints and identity-bound nonces, so an extracted surrogate cannot reproduce a valid lineage; any redistribution or task-mismatch reuse lacking a consistent chain is flagged by the trusted verifier $\mathcal{V}$ during provenance checks without revealing collaborators’ parameters.

%% file: sections_1/5_experiment.tex
\section{Experiments}\label{sec:experiment}

Our experiments focus on 
\textit{fidelity} (model performance impact), \textit{robustness} (against clustering, inversion, model extraction), and \textit{accountability} (ownership verifiability, overhead).

\subsection{Experimental Settings}
\label{sec:experiment_setting}
Experiments are executed within a consistent micro-server environment for fair comparisons across all configurations.

\noindent\textbf{Setup.}
The experiments are conducted on a high-performance server equipped with an AMD EPYC 9254 CPU operating at 2.9GHz, comprising 24 physical cores and 128MB of L3 cache.
The system is configured with 192GB of 4800MHz ECC DDR5 memory in a twelve-channel layout.
For storage, it provides dual 1.92TB NVMe SSDs. 
The training is accelerated using two NVIDIA L40 GPUs, each offering 18,176 CUDA cores, 568 Tensor Cores, and 48GB of dedicated memory. 

\noindent\textbf{Dataset.} We consider four benchmark datasets spanning both image and text classification tasks. All datasets (including both features and labels) are retained locally at $\mathcal{C}$.

\begin{packeditemize}
    \item \textit{MNIST} is a benchmark dataset for handwritten digit recognition, containing 70,000 grayscale images of digits 0-9, each of size 28$\times$28. It is split into 60,000 training and 10,000 testing samples.
    
    \item \textit{CIFAR-10} comprises 60,000 RGB images of size 32$\times$32 pixels in 10 distinct classes. We adopt 
    50,000 images for training and 10,000 for testing.
    
    \item \textit{CIFAR-100} contains 60,000 32$\times$32 images, but spans 100 fine-grained categories. Each image includes both fine and coarse labels, with the same train-test split as CIFAR-10.
    
    
    \item \textit{AG News} is a text classification corpus containing news headlines and descriptions from four categories. The dataset provides 120,000 training samples and 7,600 test instances, each labelled with its corresponding topic.
    
    \end{packeditemize}

\noindent\textbf{Model architectures.} 
A variety of widely used neural architectures are selected, including \textit{DeepLeNet}, \textit{AlexNet}, \textit{ResNet18}, and \textit{WideResNet} for image classification tasks, and \textit{TextCNN} and \textit{MiniBert} for text understanding. For vision models, \textit{DeepLeNet} and \textit{AlexNet} provide foundational CNN baselines, while \textit{ResNet18} introduces skip connections that facilitate deeper optimization. \textit{WideResNet} enhances accuracy by increasing the network’s width. In the NLP domain, \textit{TextCNN} captures local semantics through convolutional operations, and \textit{MiniBert} delivers efficient contextual encoding through lightweight transformer-based pretraining.

In the SL framework, $\mathcal{C}$ is responsible for initial data embedding and feature projection, after which the activation is forwarded to $\mathcal{T}$ for subsequent training and gradient updates. We apply \textit{DeepLeNet} and \textit{AlexNet} to MNIST, \textit{AlexNet} and \textit{ResNet18} to CIFAR-10, \textit{ResNet18} and \textit{WideResNet} to CIFAR-100, and use both \textit{TextCNN} and \textit{MiniBert} for AG News.

All models are optimized using SGD with momentum set to 0.9. Learning rates are chosen based on the dataset characteristics: 0.01 for MNIST, 0.1 for CIFAR-10 and CIFAR-100, and 0.005 for AG News. A consistent batch size of 256 is used across all training runs to maintain comparability.

\subsection{Model Performance}
\label{sec:model_performance}
To safeguard data and label privacy while enabling training traceability and copyright, \textsc{CliCooper} integrates label expansion, DP noise perturbation, and chained watermarking. Since these mechanisms inevitably interfere with the SL training, it is necessary to assess their impact on model performance. We therefore conducted three sets of experiments: (1) fixing $\gamma$=2.0 and $\varepsilon$=5.0, while varying watermark lengths $B\in \{512, 1024, 2048\}$; (2) fixing $\varepsilon$=5.0 and $B$=1024, while varying $\gamma\in$ {1.5, 2.0, 2.5}; and (3) fixing $B$=1024 and $\gamma$=2.0, while varying $\varepsilon \in$ {2.0, 5.0, 10.0}. The baseline is a standard SL model without any protection, achieving accuracies of 99.42\% (DeepLeNet on MNIST), 99.45\% (AlexNet on MNIST), 87.97\% (AlexNet on CIFAR-10), 92.32\% (ResNet18 on CIFAR-10), 71.97\% (ResNet18 on CIFAR-100), 71.14\% (WideResNet on CIFAR-100), 86.59\% (TextCNN on AG News), and 90.68\% (MiniBert on AG News).

Figs.~\ref{Watermark_Acc}--\ref{Epsilon_Acc} indicate that, in most cases, \textsc{CliCooper} has a negligible impact on accuracy, with performance remaining close to the baseline. This can be attributed to the redundancy of deep network parameters and the existence of multiple local optima; watermark embedding or DP-based noise injection guides the model toward different local optima without degrading accuracy. 
{\color{black}Label expansion does not reduce $Acc_{main}$ because we evaluate in the true-label space: predictions over pseudo-labels are mapped back to the original labels using the secret mapping, so predicting a different pseudo-label within the same true-label group is not counted as an error. Moreover, label expansion is accompanied by data augmentation to avoid data sparsity as the pseudo-label space grows.}

In some extreme settings, $\varepsilon=2.0$, accuracy drops by up to 3\% due to excessive noise interference. Interestingly, in certain cases, noise injection even improves generalization: with $\varepsilon=10.0$, AlexNet on CIFAR-10 exhibits a 1\% accuracy gain, while ResNet18 on CIFAR-100 improves by 1.2\%, which suggests that an appropriate noise can act as a regularizer, alleviating overfitting and slightly enhancing performance.


\begin{figure*}[t]
    \centering
    \subfigure[AlexNet\_MNIST]{
    \begin{minipage}[t]{0.23\textwidth}
    \centering
    \includegraphics[width=1.35in]{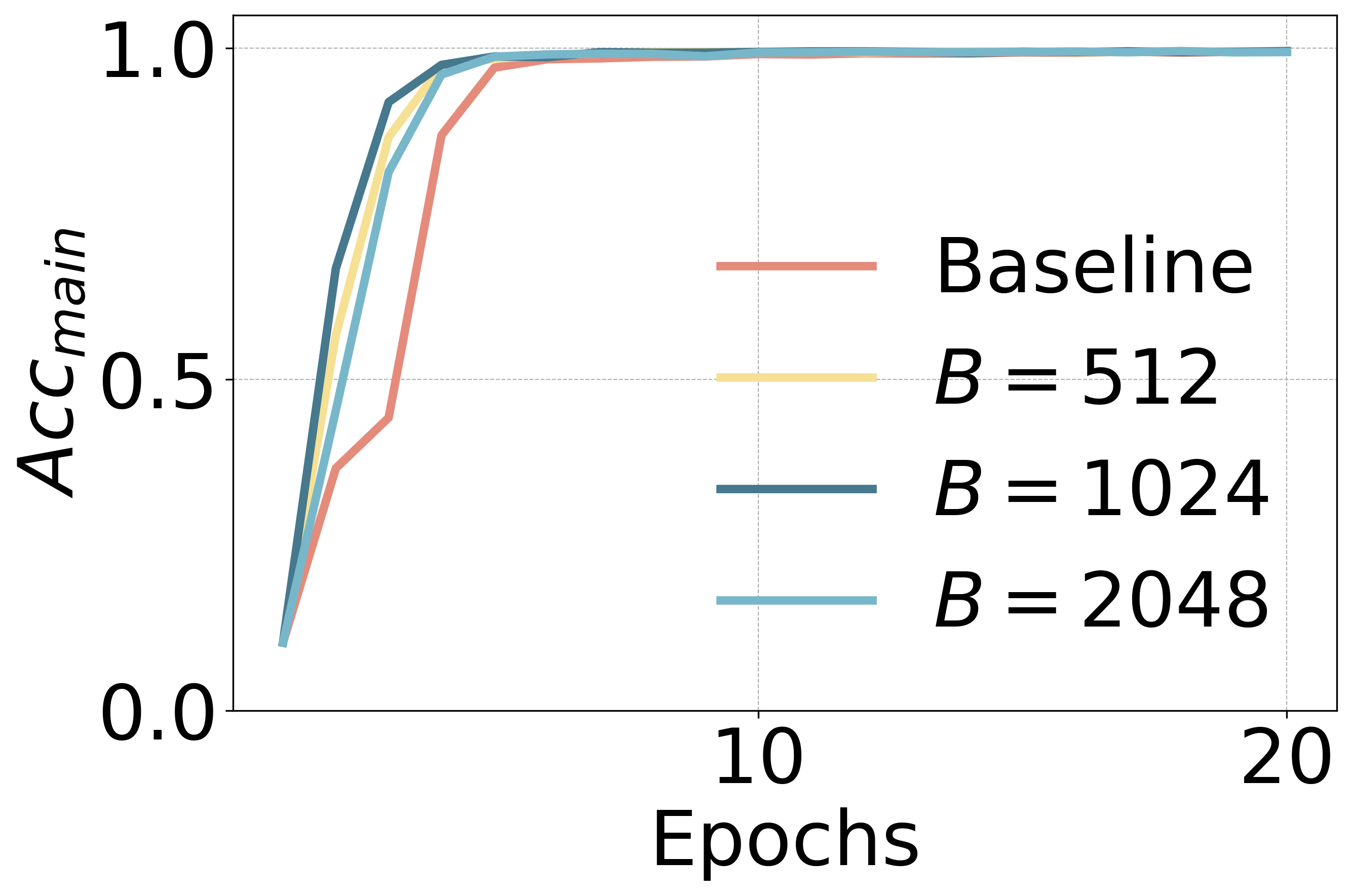}
    \end{minipage}
    \label{resnet_mnist_wm_acc}
    }
    \subfigure[ResNet18\_CIFAR10]{
    \begin{minipage}[t]{0.23\textwidth}
    \centering
    \includegraphics[width=1.35in]{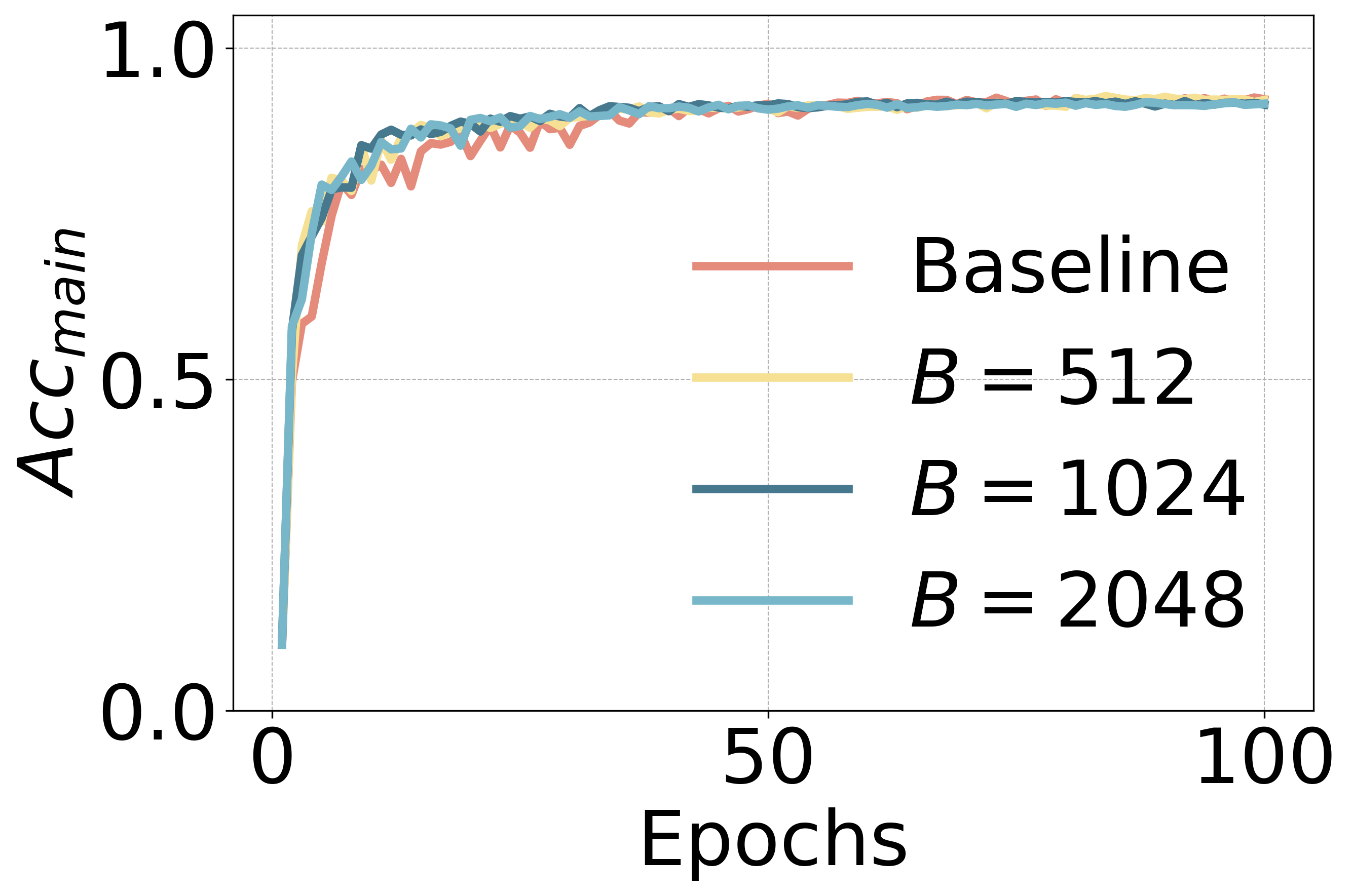}
    \end{minipage}
    \label{resnet18_cifar10_wm_acc}
    }
    \subfigure[WideResNet\_CIFAR100]{
    \begin{minipage}[t]{0.23\textwidth}
    \centering
    \includegraphics[width=1.35in]{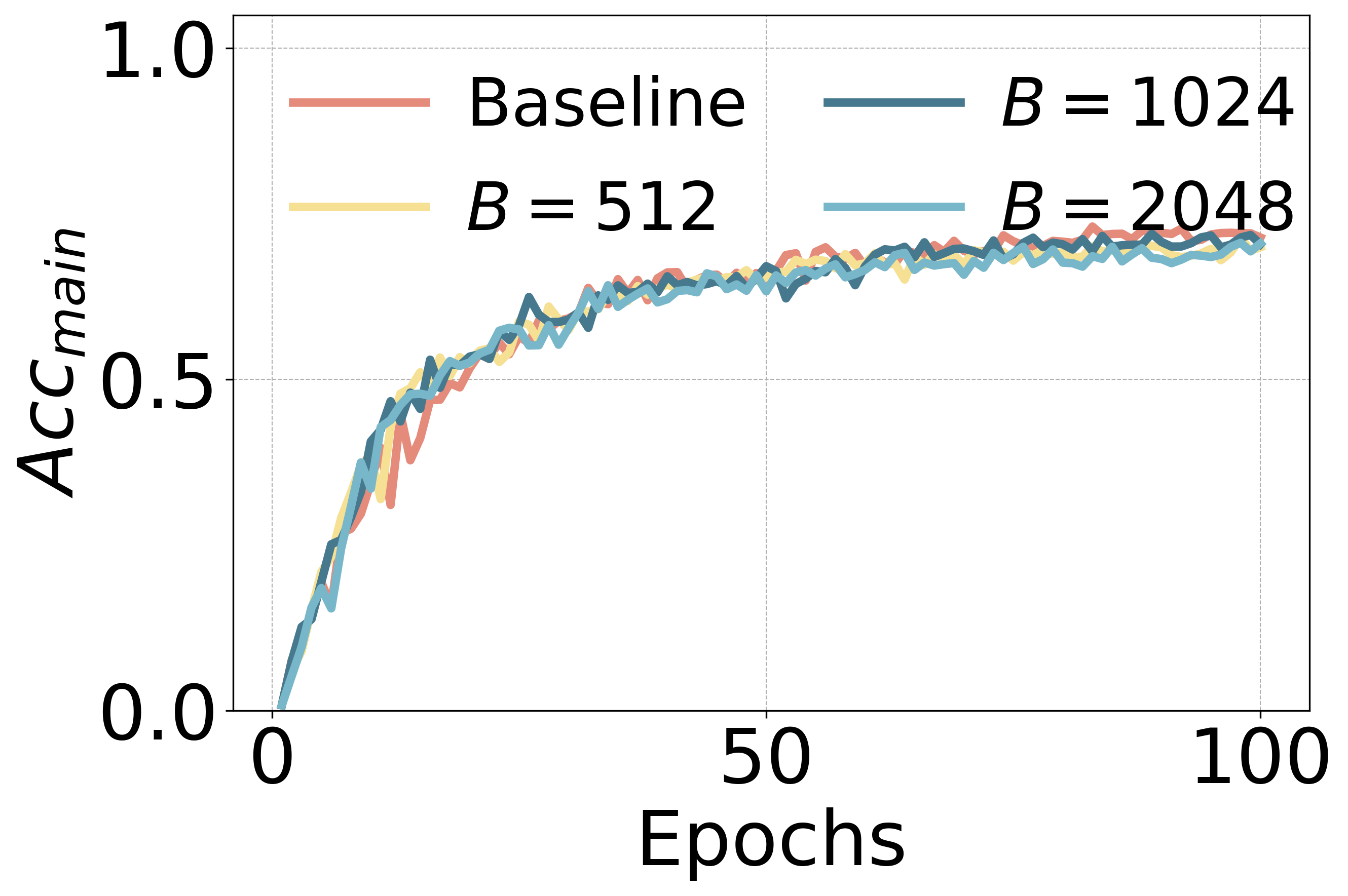}
    \end{minipage}
    \label{wideresnet_cifar100_wm_acc}
    }
    \subfigure[MiniBert\_AG News]{
    \begin{minipage}[t]{0.23\textwidth}
    \centering
    \includegraphics[width=1.35in]{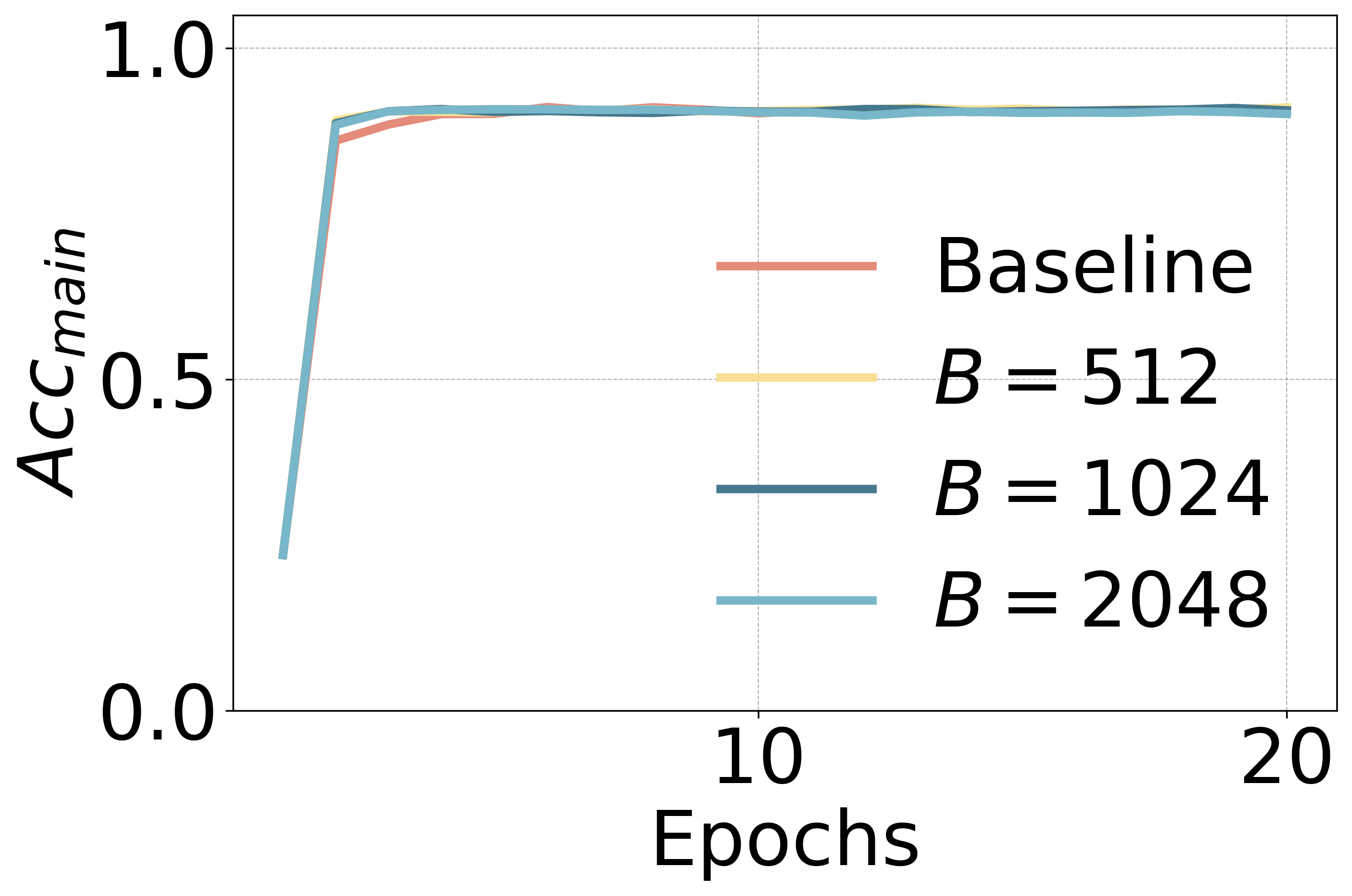}
    \end{minipage}
    \label{bert_news_wm_acc}
    }
\caption{
Comparison of the main accuracy with baseline across different architectures and datasets, where $\varepsilon$=5.0, $\gamma$=2.0, $B$=512/1024/2048.
}
\label{Watermark_Acc}
\end{figure*}

\begin{figure*}[t]
    \centering
    \subfigure[AlexNet\_MNIST]{
    \begin{minipage}[t]{0.23\textwidth}
    \centering
    \includegraphics[width=1.35in]{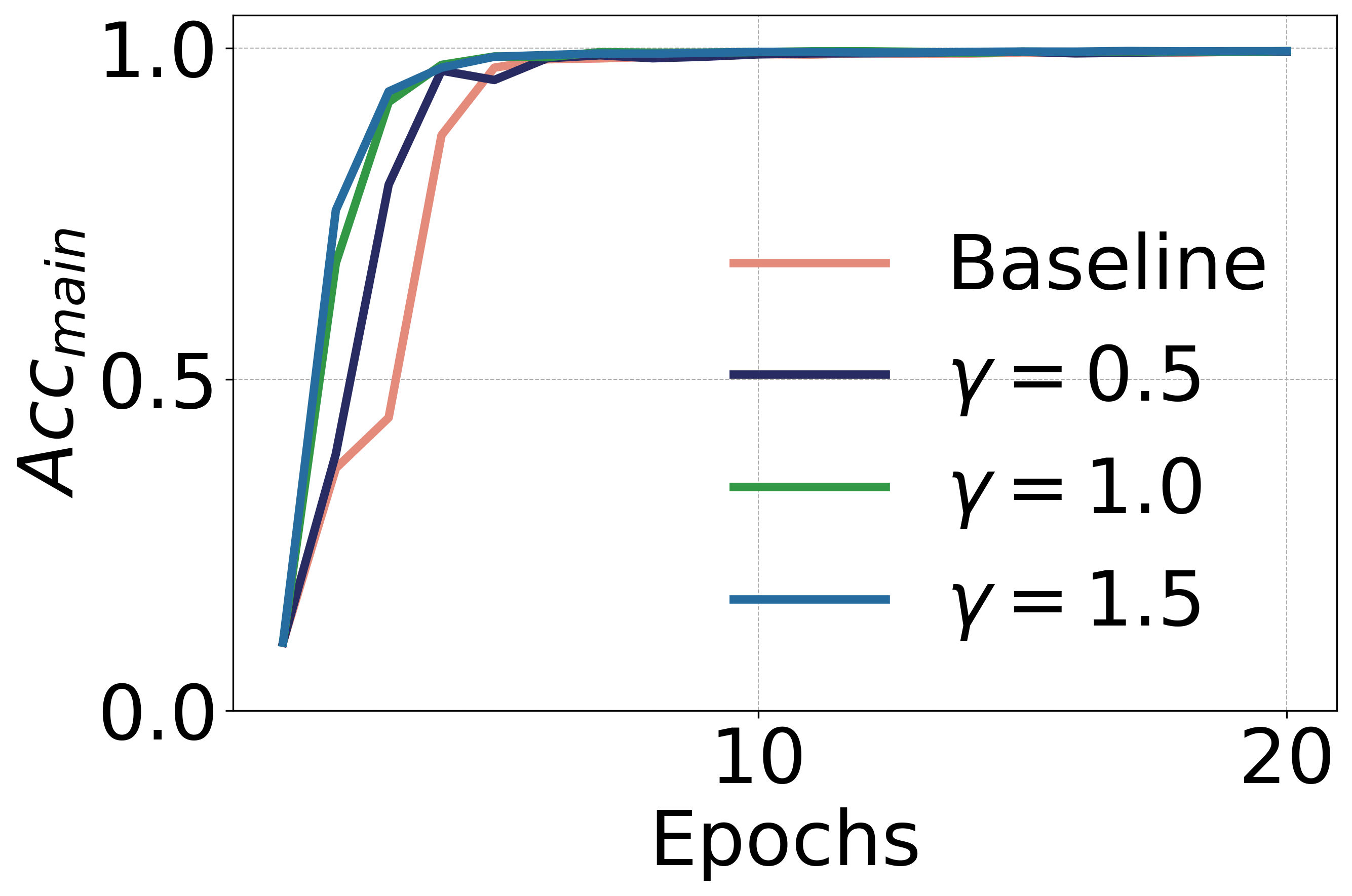}
    \end{minipage}
    \label{alexnet_mnist_rate_acc}
    }
    \subfigure[ResNet18\_CIFAR10]{
    \begin{minipage}[t]{0.23\textwidth}
    \centering
    \includegraphics[width=1.35in]{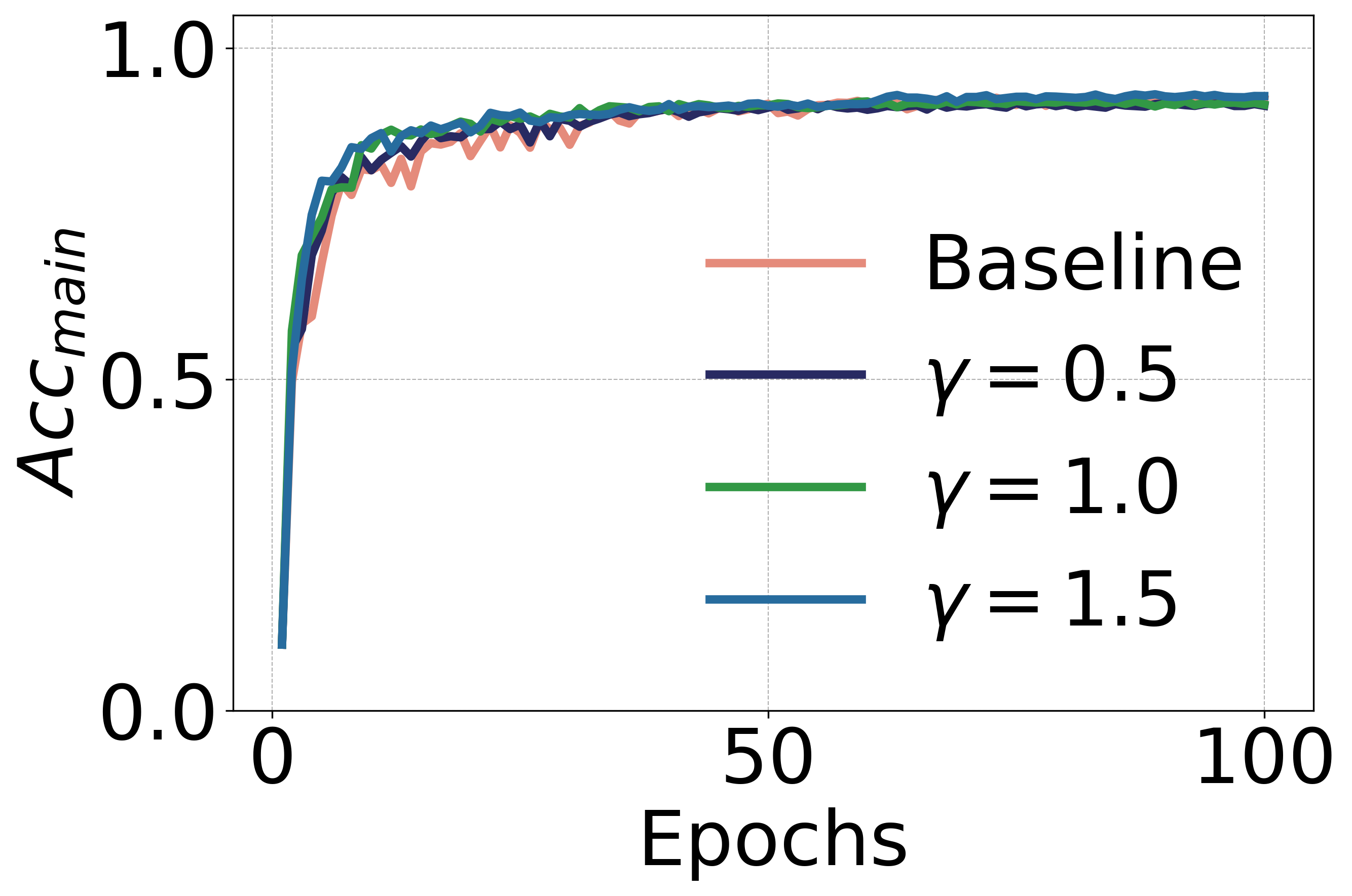}
    \end{minipage}
    \label{resnet18_cifar10_rate_acc}
    }
    \subfigure[WideResNet\_CIFAR100]{
    \begin{minipage}[t]{0.23\textwidth}
    \centering
    \includegraphics[width=1.35in]{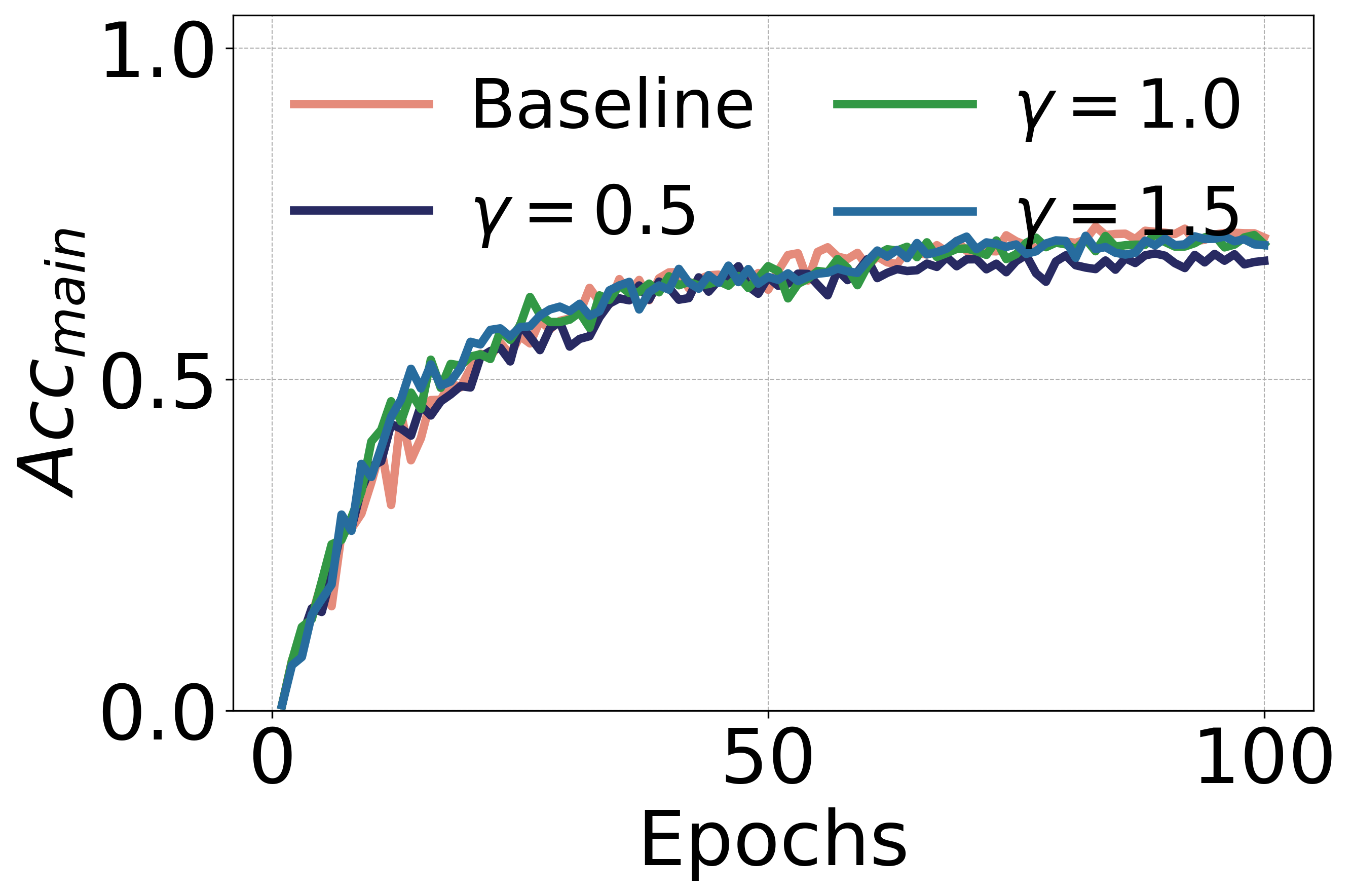}
    \end{minipage}
    \label{wideresnet_cifar100_rate_acc}
    }
    \subfigure[MiniBert\_AG News]{
    \begin{minipage}[t]{0.23\textwidth}
    \centering
    \includegraphics[width=1.35in]{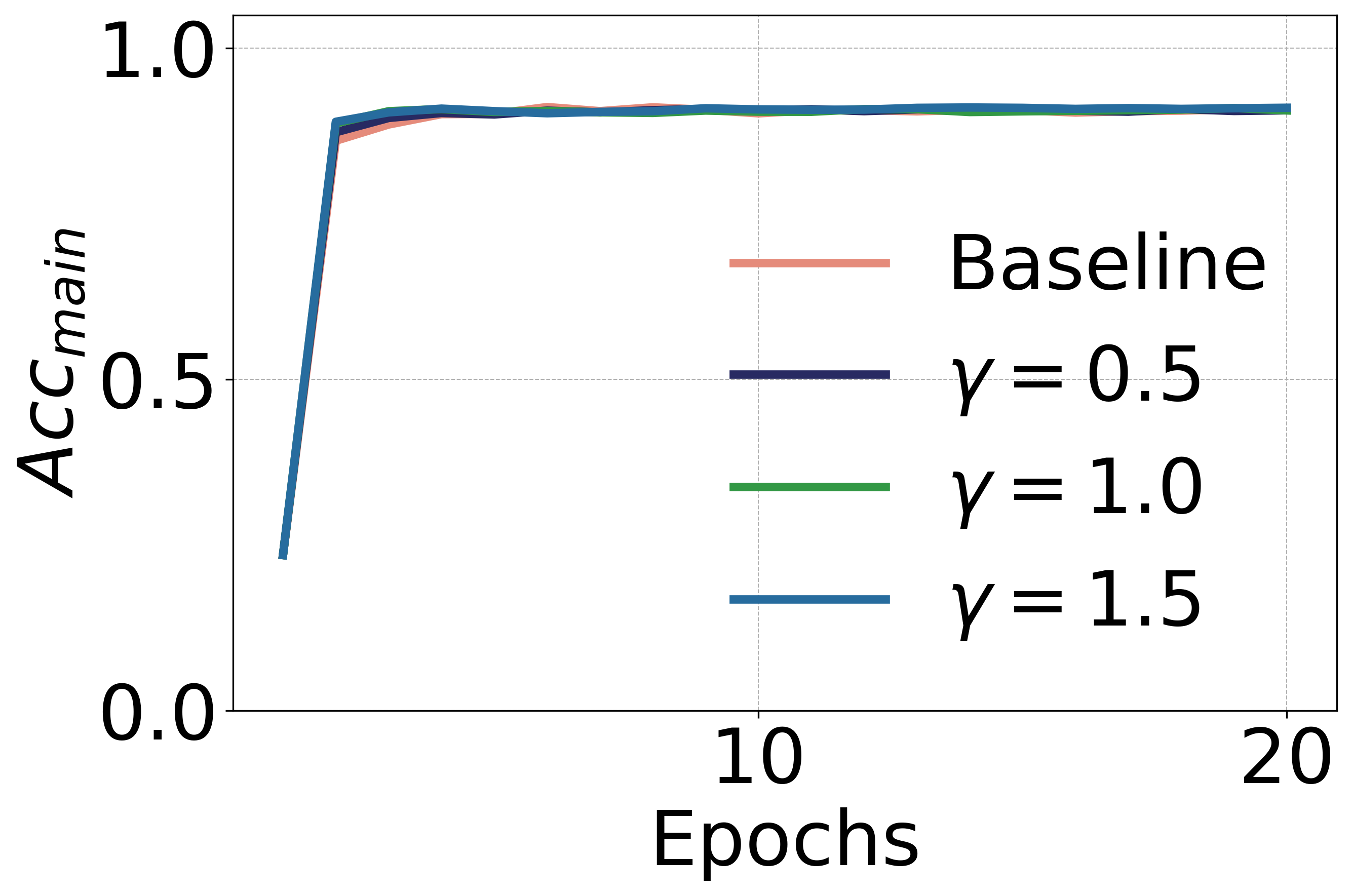}
    \end{minipage}
    \label{bert_news_rate_acc}
    }
\caption{
Comparison of the main accuracy with baseline across different architectures and datasets, where $\varepsilon$=5.0, $B$=1024, $\gamma$=1.5/2.0/2.5.
}
\label{Rate_Acc}
\end{figure*}

\begin{figure*}[t]
    \centering
    \subfigure[AlexNet\_MNIST]{
    \begin{minipage}[t]{0.23\textwidth}
    \centering
    \includegraphics[width=1.35in]{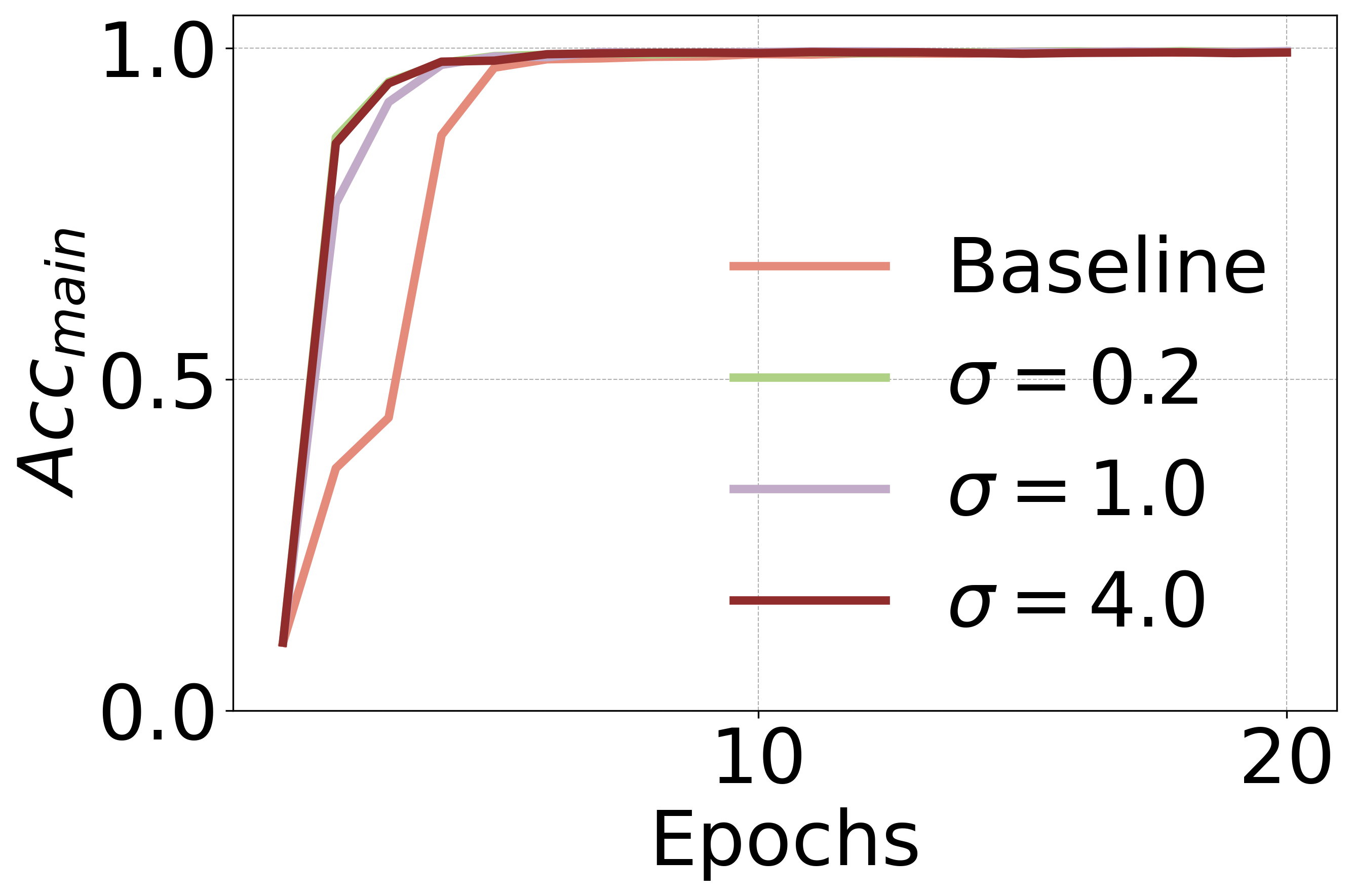}
    \end{minipage}
    \label{alexnet_mnist_epsilon_acc}
    }
    \subfigure[ResNet18\_CIFAR10]{
    \begin{minipage}[t]{0.23\textwidth}
    \centering
    \includegraphics[width=1.35in]{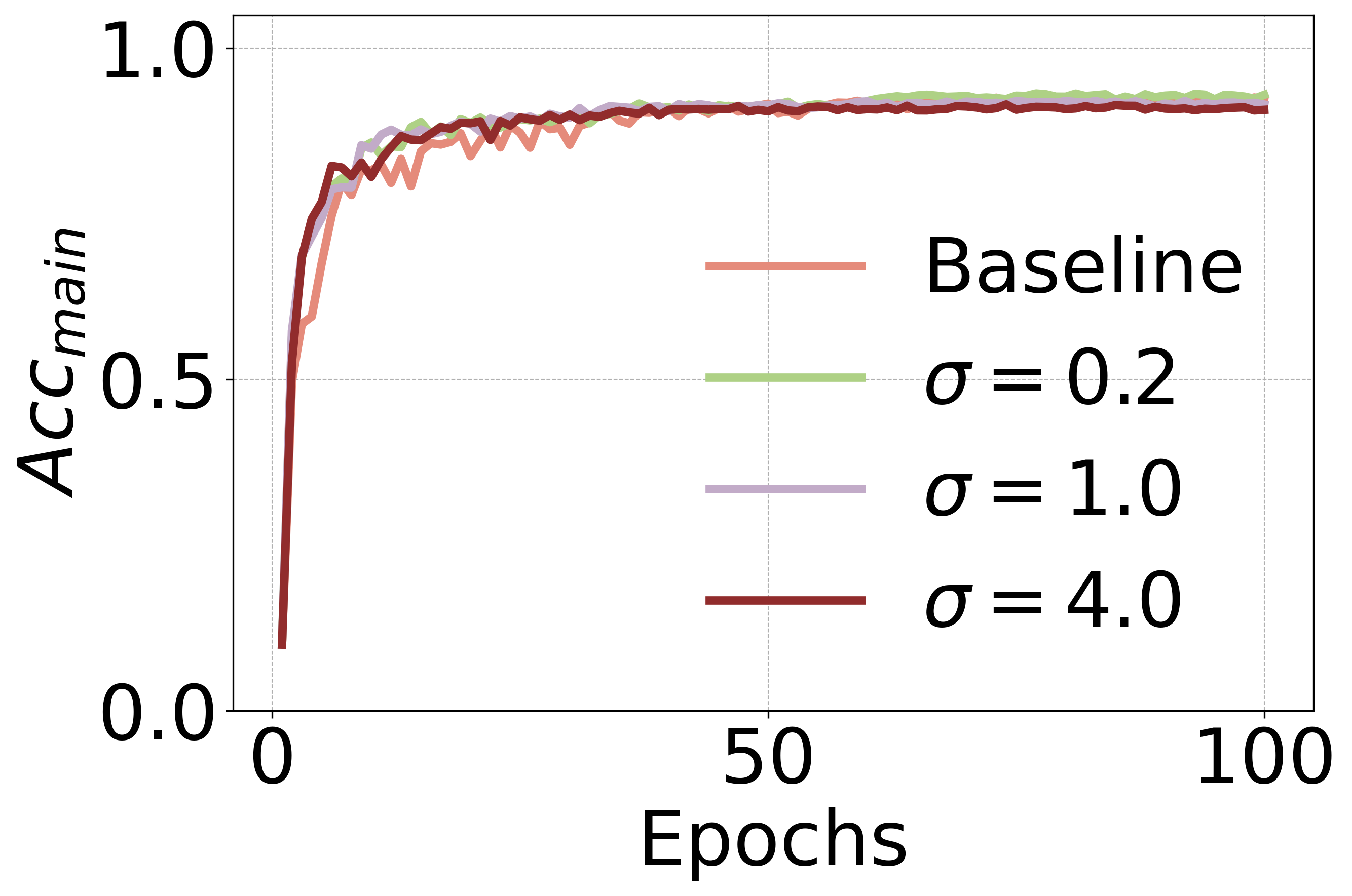}
    \end{minipage}
    \label{resnet18_cifar10_epsilon_acc}
    }
    \subfigure[WideResNet\_CIFAR100]{
    \begin{minipage}[t]{0.23\textwidth}
    \centering
    \includegraphics[width=1.35in]{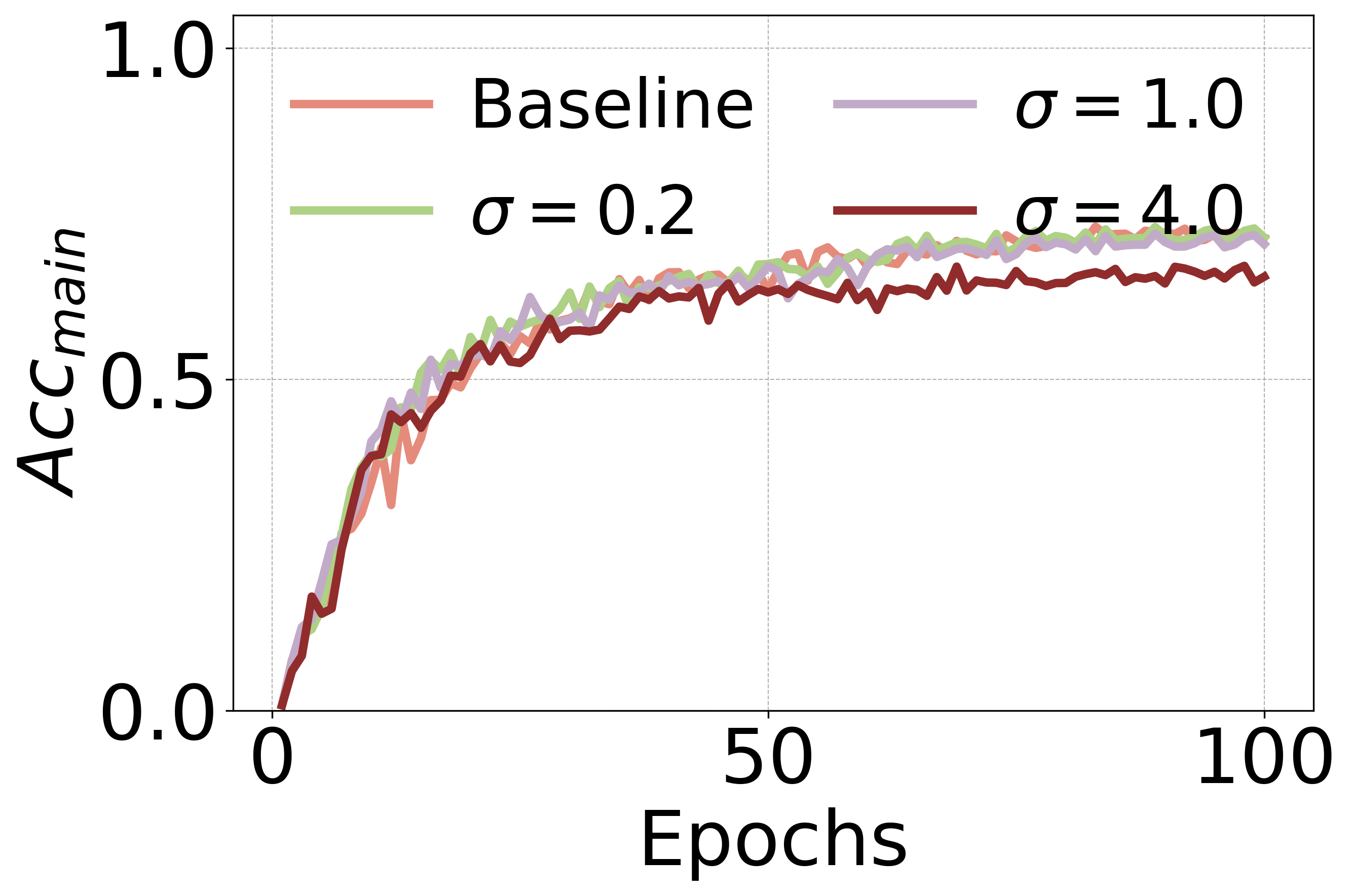}
    \end{minipage}
    \label{wideresnet_cifar100_epsilon_acc}
    }
    \subfigure[MiniBert\_AG News]{
    \begin{minipage}[t]{0.23\textwidth}
    \centering
    \includegraphics[width=1.35in]{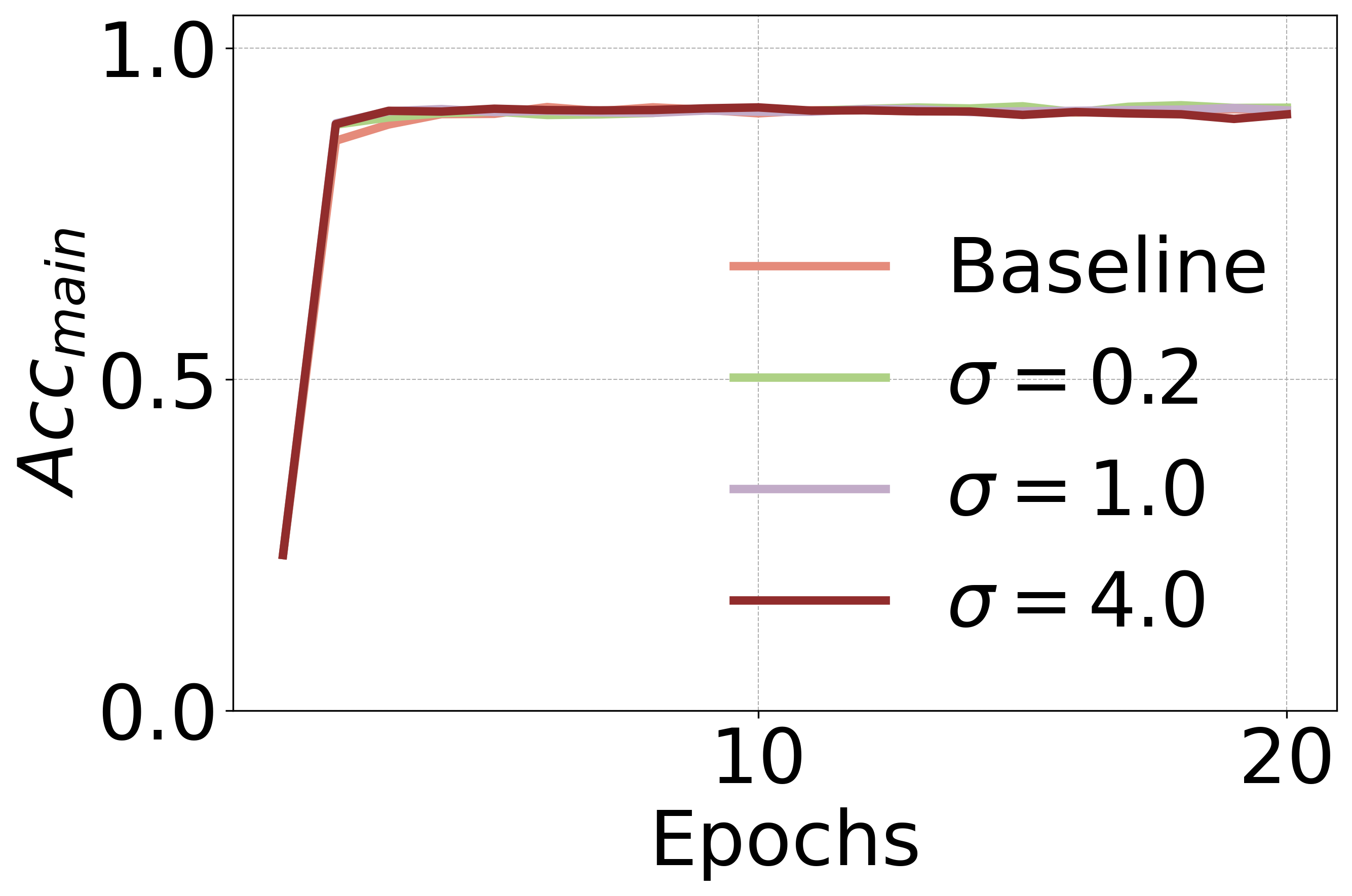}
    \end{minipage}
    \label{bert_news_epsilon_acc}
    }
\caption{
Comparison of the main accuracy with baseline across different architectures and datasets, where $B$=1024, $\gamma$=2.0, $\varepsilon$=2.0/5.0/10.0.
}
\label{Epsilon_Acc}
\end{figure*}

\subsection{Resistance to Data Clustering Attacks}
\label{sec:cluster_attack}
\input{tables/cluster-perfect}

\begin{table*}[!ht]
\centering
\caption{
Visual comparison of inversion reconstructed samples under DP noise levels ($\varepsilon$) for CIFAR-10 and CIFAR-100.}
\renewcommand{\arraystretch}{0.9}
\setlength{\tabcolsep}{8pt} 
\resizebox{\textwidth}{!}{
\begin{threeparttable}
{\Large
\begin{tabular}{ccc}
\hline
       & CIFAR-10 & CIFAR-100 \\ \hline
Target & \includegraphics[width=\linewidth]{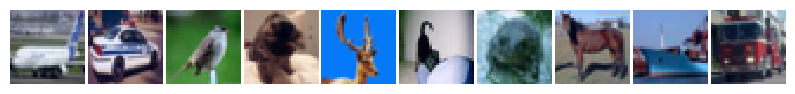}     & \includegraphics[width=\linewidth]{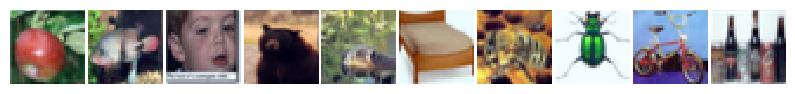}        \\ \hline
Baseline   & \includegraphics[width=\linewidth]{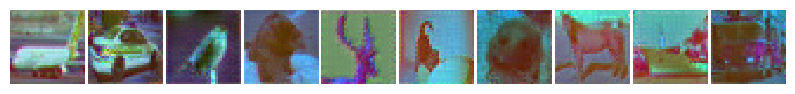}     & \includegraphics[width=\linewidth]{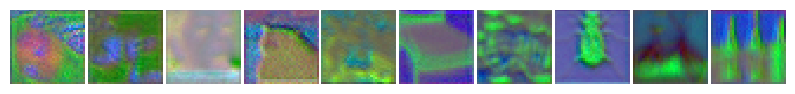}       \\ \hline
$\varepsilon$=10.0    & \includegraphics[width=\linewidth]{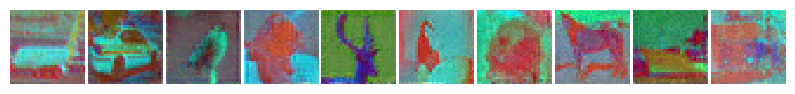}     & \includegraphics[width=\linewidth]{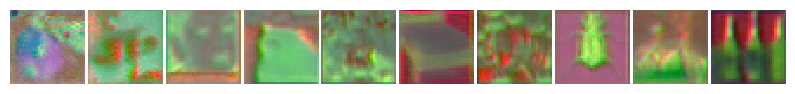}       \\ \hline
$\varepsilon$=5.0    & \includegraphics[width=\linewidth]{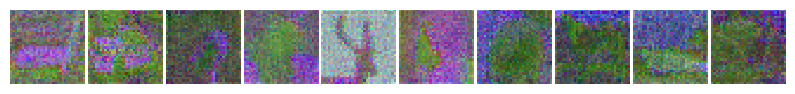}     & \includegraphics[width=\linewidth]{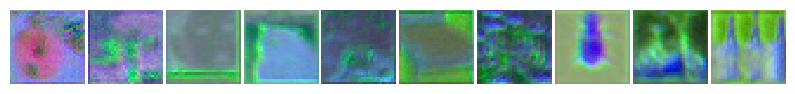}       \\ \hline
$\varepsilon$=2.0   & \includegraphics[width=\linewidth]{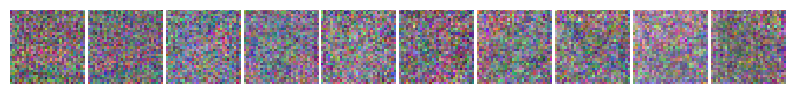}     & \includegraphics[width=\linewidth]{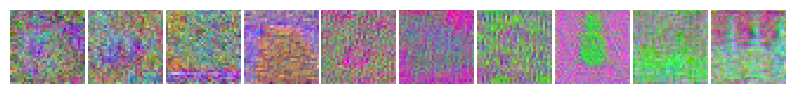}       \\ \hline
\end{tabular}
}
\label{inversion_attack}
\vspace{4pt}
\begin{tablenotes} 
{\Large
\vspace{1ex}
\item $\bullet$Reconstructing from unprotected activation yields relatively high similarity to the input sample—SSIM 0.50 on CIFAR-10 and 0.26 on CIFAR-100. Injecting DP noise at release markedly reduces fidelity: on CIFAR-10, SSIM drops to 0.38, 0.22, and 0.03; on CIFAR-100, it falls to 0.20, 0.15, and 0.07 for $\varepsilon$=10.0, 5.0, 2.0, respectively.
}
\end{tablenotes}
\end{threeparttable}
}
\end{table*}

{\color{black} We evaluate a clustering-based label matching attacker in the split-learning setting where an internal trainer $\mathcal{T}_1$ cannot access the client’s raw inputs or true labels, but can observe DP-protected activations and the corresponding pseudo-labels produced by label expansion. The attacker applies \textbf{K-means}~\cite{clustering_b}, \textbf{Birch}~\cite{birch}, and \textbf{DBSCAN}~\cite{dbscan} to the observed activations (all the clustering methods infer the number of clusters automatically) and tries to recover the hidden grouping structure induced by label expansion. We run the attack three times and report the average \textbf{Perfect Clustering Accuracy}, defined as the fraction of true-label groups recovered exactly (i.e., each predicted cluster matches a complete true group with no extra or missing samples).
    
}

{\color{black}Table~\ref{table-perfect-cluster} shows that without protection (Baseline), clustering accuracy is 100\% because activations and true labels are one-to-one, making grouping trivial. Under the protection, on CIFAR10/CIFAR100, clustering fails: clustering accuracy is 0\% across settings; i.e., DP perturbation plus label expansion largely destroys cluster separability in harder vision tasks. On MNIST, the attacker achieves moderate success ($\le $50\%) because the underlying representations are more separable, making residual structure easier to exploit even after perturbation. The easiest case is AG News, where clustering accuracy reaches 50\%–90\%; text representations can retain strong class separability and thus remain vulnerable to unsupervised grouping in some configurations.

As expected, decreasing $\varepsilon$ (stronger DP noise) reduces both accuracies by obscuring activation geometry, while increasing $\gamma$ (more pseudo-labels per true class) further lowers accuracy by increasing intra-class dispersion and inter-group mixing, making correct grouping harder for unsupervised clustering.}
Label expansion plus DP noise renders $\mathbb{M}{\mathcal{C}}^{\text{\tiny DP}}$ non-clusterable to an honest-but-curious $\mathcal{T}_1$, addressing \textcolor{violet}{\textbf{RQ1} (data privacy)}.

\subsection{Resistance to Data Inversion Attacks}
\label{sec:inversion_attack}

An honest-but-curious $\mathcal{T}_1$, given only the activation uploaded by $\mathcal{C}$ and no access to $\mathcal{C}$’s parameters, may attempt to reconstruct $\mathcal{C}$’s training data. Following the UNSplit framework~\cite{unsplit_27}, $\mathcal{T}_1$ initializes a surrogate network matching $\mathcal{C}$’s architecture and poses reconstruction as joint optimization over inputs and surrogate weights. We evaluate robustness on two representative settings—ResNet18/CIFAR-10 and WideResNet/CIFAR-100—where $\mathcal{C}$ performs training-data embedding and applies DP to the activation.

We quantify similarity using Structural Similarity Index (SSIM) and summarize visuals in Table~\ref{inversion_attack}. The first row contains target images; the second shows reconstructions from the unprotected baseline, which recover clear shapes and textures, indicating severe leakage. Introducing DP progressively degrades reconstructions: with $\varepsilon=10.0$, outlines and colors blur; at $\varepsilon=5.0$, structures largely vanish, and noise dominates; by $\varepsilon=2.0$, outputs collapse into snowflake-like patterns devoid of semantic content. SSIM scores follow the same monotonic decline.
Thus, stronger privacy budgets (smaller $\varepsilon$) render UNSplit ineffective in practice, while label expansion prevents alignment between any residual visual cues and true class semantics, adding defense-in-depth. The label expansion and DP noise on activations thwart inversion attempts, ensuring that \textsc{CliCooper} addresses \textcolor{violet}{\textbf{RQ1} (data privacy)}.

\subsection{Resistance to Model Extraction Attacks}
\label{sec:extraction_attack}

In our setting, the trained SL model is released in a black-box manner: adversaries cannot access the architecture or parameters and can only issue queries to the public API. Given an unlabeled dataset, an attacker may attempt model extraction~\cite{model_extraction_35}
by querying the API for pseudo-labels and training a surrogate. Our label expansion prevents this: the API returns only pseudo-labels, and without the one-to-many and inverse mappings 
($\mathcal{G}_Y$,$\mathcal{G}_{Y}^{-1}$), the true labels are unrecoverable.

To test robustness, we construct unlabeled inputs by adding noise to MNIST, CIFAR10, CIFAR100 training data and by randomly shuffling tokens for AG News. We query the API for pseudo-labels with expansion factor $\gamma=2.0$ (e.g., 10 true classes $\to $ 20 pseudo-labels). Lacking ($\mathcal{G}_Y$,$\mathcal{G}_{Y}^{-1}$), the attacker can only map each pseudo-label group to a true class at random, then trains a surrogate on the pseudo-labeled set and evaluates it on the true test set.

{\color{black}Table~\ref{SM_Acc} summarizes outcomes: the second to fourth columns show the SL model $W$ accuracy, the pseudo-labeled dataset $\mathbb{D}'$ accuracy, and the attacker’s surrogate model $W'$ accuracy, respectively. Across datasets/architectures, attacker performance remains near random guess, $\approx $10\% for MNIST and CIFAR10, 1\% for CIFAR100, 25\% for AG News, far below the SL model (e.g., 99.5\% on MNIST, $\approx $90\% on AG News). Thus, API outputs cannot be exploited to assemble a meaningful training set or a competitive surrogate.}
To this end, \textsc{CliCooper} resolves \textcolor{violet}{\textbf{RQ3} (unauthorized-use defense)}.

\input{tables/model-extraction}

\input{tables/time_acc}

\input{tables/communication-T}

\subsection{Ownership Verifiability and Overhead}

As shown in Table~\ref{time-acc}, watermark extraction achieves over 99\% accuracy across all models and settings, providing reliable and automated proof of contribution. This detection rate confirms the effectiveness of the chained design, where each trainer’s seed derives from its predecessor’s activation, producing stable model-dependent signals that cannot be precomputed or substituted without following the designated training process. 
Increasing watermark length $B$ predictably raises embedding and verification time (linear encoder/decoder), yet the absolute costs remain small relative to training: even at $B=2048$, embed and verify each stay in the single– to low–tens–of–milliseconds range, so the end-to-end impact is marginal and does not affect scheduling or throughput for compensated training.
Although total training time increases with the label-expansion factor $\gamma$ due to larger pseudo-labeled datasets, this effect is independent of watermarking and remains within acceptable bounds.
%

{\color{black}Table~\ref{table-communicatoin} reports the per-link communication latency in \textsc{CliCooper} with a batch size of 256 and an assumed bandwidth of 200 MB/s. Since only intermediate activations are relayed along  $\mathcal{C} \rightarrow \mathcal{T}_1 \rightarrow \mathcal{T}_2 \rightarrow \mathcal{T}_3$, the latency mainly depends on the interface tensor size and thus increases with model/dataset complexity.
For $n$ trainer clients in a relay chain, the per-iteration communication time scales linearly with the number of segment boundaries and the interface size, i.e., $T_{comm}=O(n(\frac{\bar{s}}{R}+\theta ))$, where $\bar{s}$ is the average interface tensor size, $R$ is the link bandwidth, and $\theta$ is the per-link propagation/handshake latency.
The worst case is CIFAR100 with WideResNet, where the bottleneck link takes 0.80 s, and the total end-to-end latency is 1.40 s. In contrast, AG News with TextCNN yields a bottleneck of 0.06 s and a total of 0.10 s. 
Even the maximum latency is negligible compared to the training time in Table~\ref{time-acc}; multi-trainer coordination incurs only modest overhead under typical bandwidth.
}
To this end, \textsc{CliCooper} resolves \textcolor{violet}{\textbf{RQ2} (layer ownership)}.

\subsection{\color{black}Discussion on Hyperparameter Choices}
{\color{black}
\noindent\textbf{Average expansion factor $\gamma$}. $\gamma$ controls the average label expansion ratio: a larger $\gamma$ means each true label is mapped to more pseudo labels.
Empirically, Fig.~\ref{Rate_Acc} shows that changing $\gamma$ does not degrade the main-task accuracy nor slow down convergence.
Meanwhile, Table~\ref{table-perfect-cluster} indicates a clear privacy trend: smaller $\gamma$ yields higher clustering-attack accuracy, because fewer pseudo labels per true class leave stronger and more compact grouping structures for attackers to exploit. 
In terms of efficiency, Table~\ref{time-acc} shows that training time increases with $\gamma$, since a larger pseudo-label space requires more augmented samples, which enlarges the training set and computational cost. 
Overall, $\gamma$ introduces a privacy–efficiency trade-off while keeping utility stable; we recommend $\gamma$=2.0 as a practical default balancing stronger protection and moderate overhead.

\noindent\textbf{DP budget $\varepsilon$}. $\varepsilon$ determines the strength of DP noise injected into activations: a smaller $\varepsilon$ corresponds to a stronger perturbation. 
Fig.~\ref{Epsilon_Acc} suggests that overly strong perturbation (e.g., $\varepsilon$=2.0) can incur up to about 3\% accuracy loss in some settings, indicating that excessive noise may distort useful representations. On the other hand, Table~\ref{table-perfect-cluster} shows that reducing $\varepsilon$ suppresses clustering-based label matching, and Table~\ref{inversion_attack} further confirms that stronger DP noise leads to lower SSIM under inversion attacks, i.e., weaker reconstruction quality. Therefore, $\varepsilon$ governs a direct privacy–utility trade-off. 
We select $\varepsilon$=5.0 as a reasonable compromise that substantially reduces attack success while keeping accuracy close to baseline.

\noindent\textbf{Watermark length $B$}. $B$'s length affects the payload capacity and robustness of trainer identification, but it should not penalize utility. 
Fig.~\ref{Watermark_Acc} shows that varying $B$ does not noticeably change model accuracy, consistent with watermark embedding applied after convergence and behaving as a lightweight post-training step. 
Although Table~\ref{time-acc} indicates that larger $B$ increases embedding and verification time, the overhead remains millisecond-level, negligible compared to end-to-end training time. 
Hence, $B$ can be chosen based on the required identity/traceability information (e.g., longer $B$ for richer identifiers), without materially impacting training.
}

%% file: tables/cluster-perfect.tex
\begin{table*}[!t]
\centering
\caption{\color{black}Perfect clustering accuracy (in \%) under different cluster methods and protection settings.}
\label{table-perfect-cluster}
\renewcommand{\arraystretch}{0.92}

\resizebox{\textwidth}{!}{
\begin{threeparttable}
\begin{tabular}{clllllllll}
\toprule
\multicolumn{2}{c}{Cluster Methods}          & \makecell{MNIST \\ DeepLeNet} & \makecell{MNIST \\ AlexNet} & \makecell{CIFAR10 \\ AlexNet} & \makecell{CIFAR10 \\ ResNet18} & \makecell{CIFAR100 \\ ResNet18} & \makecell{CIFAR100 \\ WideResNet} & \makecell{AG News \\ TextCNN} & \makecell{AG News \\ MiniBert} \\ \hline
\multicolumn{2}{c}{Baseline}                 & 100.00            & 100.00          & 100.00           & 100.00            & 100.00             & 100.00               & 100.00           & 100.00            \\ \hline
\multirow{6}{*}{K-means}  & $\varepsilon$=2.0  & 23.33             & 16.66           & 0.00             & 0.00              & 0.00               & 0.00                 & 41.66            & 83.33             \\
                        & $\varepsilon$=5.0  & 40.00             & 30.00           & 0.00             & 0.00              & 0.00               & 0.00                 & 50.00            & 91.66             \\
                        & $\varepsilon$=10.0 & 46.66             & 50.00           & 0.00             & 0.00              & 0.00               & 0.00                 & 66.66            & 91.66             \\
                        & $\gamma$=1.5       & 23.33             & 20.00           & 0.00             & 0.00              & 0.00               & 0.00                 & 75.00            & 91.66             \\
                        & $\gamma$=2.0       & 20.00             & 13.33           & 0.00             & 0.00              & 0.00               & 0.00                 & 50.00            & 91.66             \\
                        & $\gamma$=2.5       & 13.33             & 13.33           & 0.00             & 0.00              & 0.00               & 0.00                 & 33.33            & 83.33             \\ \hline
\multirow{6}{*}{Birch}  & $\varepsilon$=2.0  & 13.33             & 13.33           & 0.00             & 0.00              & 0.00               & 0.00                 & 16.66            & 91.66             \\
                        & $\varepsilon$=5.0  & 30.00             & 16.66           & 0.00             & 0.00              & 0.00               & 0.00                 & 33.33            & 91.66             \\
                        & $\varepsilon$=10.0 & 43.33             & 60.00           & 0.00             & 0.00              & 0.00               & 0.00                 & 75.00            & 91.66             \\
                        & $\gamma$=1.5       & 56.66             & 50.00           & 0.00             & 0.00              & 0.00               & 0.00                 & 66.66            & 91.66             \\
                        & $\gamma$=2.0       & 53.33             & 46.66           & 0.00             & 0.00              & 0.00               & 0.00                 & 50.00            & 91.66             \\
                        & $\gamma$=2.5       & 26.66             & 26.66           & 0.00             & 0.00              & 0.00               & 0.00                 & 26.66            & 83.33             \\ \hline
\multirow{6}{*}{DBSCAN} & $\varepsilon$=2.0  & 0.00              & 3.33            & 0.00             & 0.00              & 0.00               & 0.00                 & 0.00             & 0.00              \\
                        & $\varepsilon$=5.0  & 10.00             & 10.00           & 0.00             & 0.00              & 0.00               & 0.00                 & 0.00             & 8.33              \\
                        & $\varepsilon$=10.0 & 13.33             & 13.33           & 0.00             & 0.00              & 0.00               & 0.00                 & 0.00             & 8.33              \\
                        & $\gamma$=1.5       & 10.00             & 16.66           & 0.00             & 0.00              & 0.00               & 0.00                 & 25.00             & 16.67              \\
                        & $\gamma$=2.0       & 6.66              & 13.33           & 0.00             & 0.00              & 0.00               & 0.00                 & 16.67            & 0.00              \\
                        & $\gamma$=2.5       & 6.66             & 13.33           & 0.00             & 0.00              & 0.00               & 0.00                 & 0.00            & 0.00             \\ \bottomrule
\end{tabular}
\end{threeparttable}
}
\end{table*}

%% file: tables/model-extraction.tex
\begin{table}[]
\centering
\caption{
Accuracy (\%) for resisting model extraction attacks.}
\label{SM_Acc}
\renewcommand{\arraystretch}{0.9}
\begin{threeparttable}
\begin{tabular}{lccc}
\toprule
\multicolumn{1}{c}{Datasets/Architectures} & $ Acc_{W} $ & $ Acc_{\mathcal{\mathbb{D}'}} $ & $Acc_{W'} $  \\ 
\midrule
MNIST DeepLeNet             & 99.53 & 9.87   & 10.51  \\ 
MNIST AlexNet               & 99.57 & 9.65   & 14.44  \\ 
CIFAR10 AlexNet              & 88.38 & 10.34  & 14.31  \\ 
CIFAR10 ResNet18             & 92.23 & 13.63  & 13.14  \\ 
CIFAR100 ResNet18            & 72.27 & 1.30   & 1.44  \\ 
CIFAR100 WideResNet          & 72.11 & 1.08   & 1.65  \\ 
AG News TextCNN              & 91.03 & 25.66  & 27.58  \\ 
AG News MiniBert             & 90.94 & 25.54  & 25.39  \\ \bottomrule
\end{tabular}
\end{threeparttable}
\end{table}

%% file: tables/time_acc.tex
\begin{table*}[t]
\renewcommand\arraystretch{0.8}
\centering
\caption{Total training time, watermark embed/verify time (ms), and extraction accuracy (\%) across MNIST, CIFAR10/100, and AG News models. We vary 
$\varepsilon$, $\gamma$ and watermark length $B$. Overhead remains small even at $B$=2048; accuracy $>$99\%. Larger $\gamma$ increases total training time; ``Baseline'' has no watermarking.}
\label{time-acc}
\resizebox{\linewidth}{!}{
\begin{tabular}{cccccccccc}
\toprule
                                     & Time/Acc           & \makecell{MNIST \\ DeepLeNet}              & \makecell{MNIST \\ AlexNet}                & \makecell{CIFAR10 \\ AlexNet}               & \makecell{CIFAR10 \\ ResNet18}              & \makecell{CIFAR100 \\ ResNet18}              & \makecell{CIFAR100 \\ WideResNet}            & \makecell{AG News \\ TextCNN}               & \makecell{AG News \\ MiniBert}              \\ 
\midrule
                                     & Training time (s)  & \cellcolor[HTML]{FFFFFF}629.98 & \cellcolor[HTML]{FFFFFF}509.71 & \cellcolor[HTML]{FFFFFF}473.07 & \cellcolor[HTML]{FFFFFF}1013.43 & \cellcolor[HTML]{FFFFFF}1013.63 & \cellcolor[HTML]{FFFFFF}6267.04 & \cellcolor[HTML]{FFFFFF}288.93 & \cellcolor[HTML]{FFFFFF}1488.34 \\
\multirow{-2}{*}{Baseline}           & Watermark Acc (\%)  & \cellcolor[HTML]{FFFFFF}0      & \cellcolor[HTML]{FFFFFF}0      & \cellcolor[HTML]{FFFFFF}0      & \cellcolor[HTML]{FFFFFF}0       & \cellcolor[HTML]{FFFFFF}0       & \cellcolor[HTML]{FFFFFF}0 & \cellcolor[HTML]{FFFFFF}0      & \cellcolor[HTML]{FFFFFF}0       \\ 
\midrule
                                     & Training time (s) & \cellcolor[HTML]{FFFFFF}778.87 & \cellcolor[HTML]{FFFFFF}623.61 & \cellcolor[HTML]{FFFFFF}611.44 & \cellcolor[HTML]{FFFFFF}1200.14 & \cellcolor[HTML]{FFFFFF}1214.14 & \cellcolor[HTML]{FFFFFF}7188.71 & \cellcolor[HTML]{FFFFFF}389.74 & \cellcolor[HTML]{FFFFFF}1614.01 \\
                                     & Embed time (ms)    & \cellcolor[HTML]{FFFFFF}8.68   & \cellcolor[HTML]{FFFFFF}8.94   & \cellcolor[HTML]{FFFFFF}8.83   & \cellcolor[HTML]{FFFFFF}13.17   & \cellcolor[HTML]{FFFFFF}13.85   & \cellcolor[HTML]{FFFFFF}9.56    & \cellcolor[HTML]{FFFFFF}8.31   & \cellcolor[HTML]{FFFFFF}8.17    \\
                                     & Verify time (ms)   & \cellcolor[HTML]{FFFFFF}8.65   & \cellcolor[HTML]{FFFFFF}8.95   & \cellcolor[HTML]{FFFFFF}8.67   & \cellcolor[HTML]{FFFFFF}12.99   & \cellcolor[HTML]{FFFFFF}12.98   & \cellcolor[HTML]{FFFFFF}9.45    & \cellcolor[HTML]{FFFFFF}8.29   & \cellcolor[HTML]{FFFFFF}8.17    \\
\multirow{-4}{*}{$\varepsilon$=2.0}  & Watermark Acc (\%)  & \cellcolor[HTML]{FFFFFF}99.67  & \cellcolor[HTML]{FFFFFF}99.81  & \cellcolor[HTML]{FFFFFF}99.78  & \cellcolor[HTML]{FFFFFF}99.56   & \cellcolor[HTML]{FFFFFF}99.67   & \cellcolor[HTML]{FFFFFF}99.21   & \cellcolor[HTML]{FFFFFF}99.17  & \cellcolor[HTML]{FFFFFF}99.78   \\ 
\cmidrule{9-10}
                                     & Training time (s) & \cellcolor[HTML]{FFFFFF}784.98 & \cellcolor[HTML]{FFFFFF}633.46 & \cellcolor[HTML]{FFFFFF}628.14 & \cellcolor[HTML]{FFFFFF}1192.48 & \cellcolor[HTML]{FFFFFF}1215.37 & \cellcolor[HTML]{FFFFFF}7463.68 & \cellcolor[HTML]{FFFFFF}393.29 & \cellcolor[HTML]{FFFFFF}1650.14 \\
                                     & Embed time (ms)    & \cellcolor[HTML]{FFFFFF}8.12   & \cellcolor[HTML]{FFFFFF}8.72   & \cellcolor[HTML]{FFFFFF}8.66   & \cellcolor[HTML]{FFFFFF}13.31   & \cellcolor[HTML]{FFFFFF}13.69   & \cellcolor[HTML]{FFFFFF}9.19    & \cellcolor[HTML]{FFFFFF}8.53   & \cellcolor[HTML]{FFFFFF}7.97    \\
                                     & Verify time (ms)   & \cellcolor[HTML]{FFFFFF}8.22   & \cellcolor[HTML]{FFFFFF}8.65   & \cellcolor[HTML]{FFFFFF}8.44   & \cellcolor[HTML]{FFFFFF}13.05   & \cellcolor[HTML]{FFFFFF}13.65   & \cellcolor[HTML]{FFFFFF}9.11    & \cellcolor[HTML]{FFFFFF}8.47   & \cellcolor[HTML]{FFFFFF}7.67    \\
\multirow{-4}{*}{$\varepsilon$=5.0}  & Watermark Acc (\%)  & \cellcolor[HTML]{FFFFFF}99.56  & \cellcolor[HTML]{FFFFFF}98.89  & \cellcolor[HTML]{FFFFFF}99.75  & \cellcolor[HTML]{FFFFFF}99.35   & \cellcolor[HTML]{FFFFFF}99.32   & \cellcolor[HTML]{FFFFFF}99.23   & \cellcolor[HTML]{FFFFFF}99.27  & \cellcolor[HTML]{FFFFFF}98.89   \\ 
\cmidrule{3-4}
                                     & Training time (s) & \cellcolor[HTML]{FFFFFF}786.43 & \cellcolor[HTML]{FFFFFF}640.82 & \cellcolor[HTML]{FFFFFF}639.48 & \cellcolor[HTML]{FFFFFF}1199.61 & \cellcolor[HTML]{FFFFFF}1206.17 & \cellcolor[HTML]{FFFFFF}7687.28 & \cellcolor[HTML]{FFFFFF}399.01 & \cellcolor[HTML]{FFFFFF}1576.13 \\
                                     & Embed time (ms)    & \cellcolor[HTML]{FFFFFF}8.61   & \cellcolor[HTML]{FFFFFF}8.77   & \cellcolor[HTML]{FFFFFF}8.81   & \cellcolor[HTML]{FFFFFF}13.41   & \cellcolor[HTML]{FFFFFF}13.64   & \cellcolor[HTML]{FFFFFF}9.27    & \cellcolor[HTML]{FFFFFF}7.51   & \cellcolor[HTML]{FFFFFF}7.47    \\
                                     & Verify time (ms)   & \cellcolor[HTML]{FFFFFF}8.62   & \cellcolor[HTML]{FFFFFF}8.57   & \cellcolor[HTML]{FFFFFF}8.81   & \cellcolor[HTML]{FFFFFF}13.34   & \cellcolor[HTML]{FFFFFF}13.56   & \cellcolor[HTML]{FFFFFF}9.21    & \cellcolor[HTML]{FFFFFF}7.47   & \cellcolor[HTML]{FFFFFF}7.35    \\
\multirow{-4}{*}{$\varepsilon$=10.0} & Watermark Acc (\%)  & \cellcolor[HTML]{FFFFFF}98.89  & \cellcolor[HTML]{FFFFFF}99.21  & \cellcolor[HTML]{FFFFFF}99.37  & \cellcolor[HTML]{FFFFFF}98.99   & \cellcolor[HTML]{FFFFFF}99.24   & \cellcolor[HTML]{FFFFFF}99.24   & \cellcolor[HTML]{FFFFFF}99.18  & \cellcolor[HTML]{FFFFFF}99.03   \\ 
\midrule
                                     & Training time (s) & \cellcolor[HTML]{FFFFFF}786.43 & \cellcolor[HTML]{FFFFFF}640.82 & \cellcolor[HTML]{FFFFFF}639.48 & \cellcolor[HTML]{FFFFFF}1199.61 & \cellcolor[HTML]{FFFFFF}1206.17 & \cellcolor[HTML]{FFFFFF}7687.28 & \cellcolor[HTML]{FFFFFF}399.01 & \cellcolor[HTML]{FFFFFF}1576.13 \\
                                     & Embed time (ms)    & \cellcolor[HTML]{FFFFFF}8.61   & \cellcolor[HTML]{FFFFFF}8.77   & \cellcolor[HTML]{FFFFFF}8.81   & \cellcolor[HTML]{FFFFFF}13.41   & \cellcolor[HTML]{FFFFFF}13.64   & \cellcolor[HTML]{FFFFFF}9.27    & \cellcolor[HTML]{FFFFFF}7.51   & \cellcolor[HTML]{FFFFFF}7.47    \\
                                     & Verify time (ms)   & \cellcolor[HTML]{FFFFFF}8.62   & \cellcolor[HTML]{FFFFFF}8.57   & \cellcolor[HTML]{FFFFFF}8.81   & \cellcolor[HTML]{FFFFFF}13.34   & \cellcolor[HTML]{FFFFFF}13.56   & \cellcolor[HTML]{FFFFFF}9.21    & \cellcolor[HTML]{FFFFFF}7.47   & \cellcolor[HTML]{FFFFFF}7.35    \\
\multirow{-4}{*}{$\gamma$=1.5}       & Watermark Acc (\%)  & \cellcolor[HTML]{FFFFFF}99.32  & \cellcolor[HTML]{FFFFFF}99.87  & \cellcolor[HTML]{FFFFFF}99.37  & \cellcolor[HTML]{FFFFFF}99.21   & \cellcolor[HTML]{FFFFFF}99.18   & \cellcolor[HTML]{FFFFFF}99.23   & \cellcolor[HTML]{FFFFFF}99.28  & \cellcolor[HTML]{FFFFFF}99.38   \\ 
\cmidrule{5-6}
                                     & Training time (s) & \cellcolor[HTML]{FFFFFF}918.69 & \cellcolor[HTML]{FFFFFF}809.33 & \cellcolor[HTML]{FFFFFF}802.27 & \cellcolor[HTML]{FFFFFF}1504.91 & \cellcolor[HTML]{FFFFFF}1517.11 & \cellcolor[HTML]{FFFFFF}9627.41 & \cellcolor[HTML]{FFFFFF}511.65 & \cellcolor[HTML]{FFFFFF}2008.91 \\
                                     & Embed time (ms)    & \cellcolor[HTML]{FFFFFF}8.61   & \cellcolor[HTML]{FFFFFF}8.52   & \cellcolor[HTML]{FFFFFF}9.21   & \cellcolor[HTML]{FFFFFF}13.37   & \cellcolor[HTML]{FFFFFF}13.62   & \cellcolor[HTML]{FFFFFF}8.98    & \cellcolor[HTML]{FFFFFF}7.99   & \cellcolor[HTML]{FFFFFF}7.32    \\
                                     & Verify time (ms)   & \cellcolor[HTML]{FFFFFF}8.61   & \cellcolor[HTML]{FFFFFF}8.67   & \cellcolor[HTML]{FFFFFF}9.23   & \cellcolor[HTML]{FFFFFF}13.28   & \cellcolor[HTML]{FFFFFF}13.34   & \cellcolor[HTML]{FFFFFF}9.01    & \cellcolor[HTML]{FFFFFF}7.89   & \cellcolor[HTML]{FFFFFF}7.25    \\
\multirow{-4}{*}{$\gamma$=2.0}       & Watermark Acc (\%)  & \cellcolor[HTML]{FFFFFF}99.45  & \cellcolor[HTML]{FFFFFF}99.78  & \cellcolor[HTML]{FFFFFF}99.38  & \cellcolor[HTML]{FFFFFF}99.25   & \cellcolor[HTML]{FFFFFF}99.52   & \cellcolor[HTML]{FFFFFF}99.21   & \cellcolor[HTML]{FFFFFF}99.28  & \cellcolor[HTML]{FFFFFF}99.26   \\ 
\cmidrule{7-8}
                                     & Training time (s) & \cellcolor[HTML]{FFFFFF}989.13 & \cellcolor[HTML]{FFFFFF}817.74 & \cellcolor[HTML]{FFFFFF}819.56 & \cellcolor[HTML]{FFFFFF}1538.17 & \cellcolor[HTML]{FFFFFF}1548.11 & \cellcolor[HTML]{FFFFFF}9691.42 & \cellcolor[HTML]{FFFFFF}511.15 & \cellcolor[HTML]{FFFFFF}2080.57 \\
                                     & Embed time (ms)    & \cellcolor[HTML]{FFFFFF}8.53   & \cellcolor[HTML]{FFFFFF}8.68   & \cellcolor[HTML]{FFFFFF}9.34   & \cellcolor[HTML]{FFFFFF}13.99   & \cellcolor[HTML]{FFFFFF}13.53   & \cellcolor[HTML]{FFFFFF}9.01    & \cellcolor[HTML]{FFFFFF}7.42   & \cellcolor[HTML]{FFFFFF}7.09    \\
                                     & Verify time (ms)   & \cellcolor[HTML]{FFFFFF}8.51   & \cellcolor[HTML]{FFFFFF}8.63   & \cellcolor[HTML]{FFFFFF}9.12   & \cellcolor[HTML]{FFFFFF}12.98   & \cellcolor[HTML]{FFFFFF}13.22   & \cellcolor[HTML]{FFFFFF}8.99    & \cellcolor[HTML]{FFFFFF}7.56   & \cellcolor[HTML]{FFFFFF}7.21    \\
\multirow{-4}{*}{$\gamma$=2.5}       & Watermark Acc (\%)  & \cellcolor[HTML]{FFFFFF}98.99  & \cellcolor[HTML]{FFFFFF}99.43  & \cellcolor[HTML]{FFFFFF}98.87  & \cellcolor[HTML]{FFFFFF}99.26   & \cellcolor[HTML]{FFFFFF}99.63   & \cellcolor[HTML]{FFFFFF}99.23   & \cellcolor[HTML]{FFFFFF}99.18  & \cellcolor[HTML]{FFFFFF}99.24   \\ 
\midrule
                                     & Training time (s) & \cellcolor[HTML]{FFFFFF}786.43 & \cellcolor[HTML]{FFFFFF}640.82 & \cellcolor[HTML]{FFFFFF}639.48 & \cellcolor[HTML]{FFFFFF}1199.61 & \cellcolor[HTML]{FFFFFF}1206.17 & \cellcolor[HTML]{FFFFFF}7687.28 & \cellcolor[HTML]{FFFFFF}399.01 & \cellcolor[HTML]{FFFFFF}1576.13 \\
                                     & Embed time (ms)    & \cellcolor[HTML]{FFFFFF}8.61   & \cellcolor[HTML]{FFFFFF}8.77   & \cellcolor[HTML]{FFFFFF}8.81   & \cellcolor[HTML]{FFFFFF}13.41   & \cellcolor[HTML]{FFFFFF}13.64   & \cellcolor[HTML]{FFFFFF}9.27    & \cellcolor[HTML]{FFFFFF}7.51   & \cellcolor[HTML]{FFFFFF}7.47    \\
                                     & Verify time (ms)   & \cellcolor[HTML]{FFFFFF}8.62   & \cellcolor[HTML]{FFFFFF}8.57   & \cellcolor[HTML]{FFFFFF}8.81   & \cellcolor[HTML]{FFFFFF}13.34   & \cellcolor[HTML]{FFFFFF}13.56   & \cellcolor[HTML]{FFFFFF}9.21    & \cellcolor[HTML]{FFFFFF}7.47   & \cellcolor[HTML]{FFFFFF}7.35    \\
\multirow{-4}{*}{$B$=512}            & Watermark Acc (\%)  & \cellcolor[HTML]{FFFFFF}99.21  & \cellcolor[HTML]{FFFFFF}99.48  & \cellcolor[HTML]{FFFFFF}99.89  & \cellcolor[HTML]{FFFFFF}99.26   & \cellcolor[HTML]{FFFFFF}99.02   & \cellcolor[HTML]{FFFFFF}98.89   & \cellcolor[HTML]{FFFFFF}99.25  & \cellcolor[HTML]{FFFFFF}99.34   \\ 
\cmidrule{3-4}
                                     & Training time (s) & \cellcolor[HTML]{FFFFFF}803.34 & \cellcolor[HTML]{FFFFFF}659.94 & \cellcolor[HTML]{FFFFFF}661.51 & \cellcolor[HTML]{FFFFFF}1185.81 & \cellcolor[HTML]{FFFFFF}1206.68 & \cellcolor[HTML]{FFFFFF}7654.09 & \cellcolor[HTML]{FFFFFF}416.78 & \cellcolor[HTML]{FFFFFF}1637.21 \\
                                     & Embed time (ms)    & \cellcolor[HTML]{FFFFFF}12.89  & \cellcolor[HTML]{FFFFFF}12.88  & \cellcolor[HTML]{FFFFFF}14.23  & \cellcolor[HTML]{FFFFFF}22.64   & \cellcolor[HTML]{FFFFFF}21.53   & \cellcolor[HTML]{FFFFFF}13.42   & \cellcolor[HTML]{FFFFFF}12.45  & \cellcolor[HTML]{FFFFFF}13.13   \\
                                     & Verify time (ms)   & \cellcolor[HTML]{FFFFFF}12.83  & \cellcolor[HTML]{FFFFFF}12.65  & \cellcolor[HTML]{FFFFFF}14.32  & \cellcolor[HTML]{FFFFFF}22.46   & \cellcolor[HTML]{FFFFFF}21.78   & \cellcolor[HTML]{FFFFFF}13.21   & \cellcolor[HTML]{FFFFFF}13.01  & \cellcolor[HTML]{FFFFFF}13.27   \\
\multirow{-4}{*}{$B$=1024}           & Watermark Acc (\%)  & \cellcolor[HTML]{FFFFFF}99.36  & \cellcolor[HTML]{FFFFFF}99.21  & \cellcolor[HTML]{FFFFFF}99.21  & \cellcolor[HTML]{FFFFFF}99.37   & \cellcolor[HTML]{FFFFFF}99.71   & \cellcolor[HTML]{FFFFFF}99.24   & \cellcolor[HTML]{FFFFFF}99.39  & \cellcolor[HTML]{FFFFFF}99.28   \\ 
\cmidrule{7-8}
                                     & Training time (s) & \cellcolor[HTML]{FFFFFF}906.76 & \cellcolor[HTML]{FFFFFF}680.91 & \cellcolor[HTML]{FFFFFF}680.39 & \cellcolor[HTML]{FFFFFF}1225.16 & \cellcolor[HTML]{FFFFFF}1247.42 & \cellcolor[HTML]{FFFFFF}8013.85 & \cellcolor[HTML]{FFFFFF}433.59 & \cellcolor[HTML]{FFFFFF}1773.24 \\
                                     & Embed time (ms)    & \cellcolor[HTML]{FFFFFF}21.41  & \cellcolor[HTML]{FFFFFF}25.57  & \cellcolor[HTML]{FFFFFF}25.98  & \cellcolor[HTML]{FFFFFF}43.79   & \cellcolor[HTML]{FFFFFF}40.65   & \cellcolor[HTML]{FFFFFF}24.48   & \cellcolor[HTML]{FFFFFF}24.89  & \cellcolor[HTML]{FFFFFF}24.21   \\
                                     & Verify time (ms)   & \cellcolor[HTML]{FFFFFF}21.22  & \cellcolor[HTML]{FFFFFF}25.22  & \cellcolor[HTML]{FFFFFF}24.99  & \cellcolor[HTML]{FFFFFF}43.78   & \cellcolor[HTML]{FFFFFF}40.22   & \cellcolor[HTML]{FFFFFF}24.68   & \cellcolor[HTML]{FFFFFF}24.62  & \cellcolor[HTML]{FFFFFF}24.18   \\
\multirow{-4}{*}{$B$=2048}           & Watermark Acc (\%)  & \cellcolor[HTML]{FFFFFF}99.68  & \cellcolor[HTML]{FFFFFF}99.23  & \cellcolor[HTML]{FFFFFF}99.79  & \cellcolor[HTML]{FFFFFF}99.29   & \cellcolor[HTML]{FFFFFF}98.87   & \cellcolor[HTML]{FFFFFF}99.23   & \cellcolor[HTML]{FFFFFF}99.09  & \cellcolor[HTML]{FFFFFF}99.13   \\ 
\bottomrule
\end{tabular}
}

\end{table*}

%% file: tables/communication-T.tex
\begin{table}[]
\centering
\caption{
Per-link communication latency (s) between adjacent parties.}
\label{table-communicatoin}
\renewcommand{\arraystretch}{1.0}
\resizebox{\columnwidth}{!}{
\begin{threeparttable}
\begin{tabular}{lcccc}
\toprule
\multicolumn{1}{c}{Transfer} & $ \mathcal{C} \longrightarrow \mathcal{T}_1 $ & $ \mathcal{T}_1 \longrightarrow \mathcal{T}_2 $ & $\mathcal{T}_2 \longrightarrow \mathcal{T}_3 $ & $T_{total}$ \\ 
\midrule
MNIST DeepLeNet             & 0.24 & 0.12  & 0.06  & 0.42 \\ 
MNIST AlexNet               & 0.06 & 0.09  & 0.01  & 0.16 \\ 
CIFAR10 AlexNet              & 0.08 & 0.12  & 0.02  & 0.22 \\ 
CIFAR10 ResNet18             & 0.32 & 0.16  & 0.08  & 0.56 \\ 
CIFAR100 ResNet18            & 0.32 & 0.16  & 0.08  & 0.56 \\ 
CIFAR100 WideResNet          & 0.80 & 0.40  & 0.20  & 1.40 \\ 
AG News TextCNN              & 0.08 & 0.01  & 0.01  & 0.10 \\ 
AG News MiniBert             & 0.08 & 0.06  & 0.06  & 0.20 \\ \bottomrule
\end{tabular}
\end{threeparttable}
}
\end{table}

%% file: sections_1/6_relatedwork.tex
\section{Related Works}\label{sec:related work}


\begin{table}[t]
\centering
\caption{
Existing privacy protection methods for SL}
\label{tab_label_privacy}
\renewcommand{\arraystretch}{1.1} 
\resizebox{\linewidth}{!}{
\begin{tabular}{c|cccc}
\toprule

 \diagbox{\textbf{Method}}{\textbf{Protection}}  & \textbf{\makecell{Label \\ Privacy}} & \textbf{\makecell{Activation \\ Data}} & \textbf{\makecell{Performance \\ Preserved}} & \textbf{\makecell{Copyright \& \\ Train Trace}}  \\
\midrule
Vanilla SL~\cite{vepakomma2018split_13, poirot2019split_40}   &   \xmark &  \xmark & \cmark  & \xmark \\
U-shaped SL~\cite{gupta2018distributed_12, u_shaped_38, u_shaped_39}   &   \cmark &  \xmark & \cmark  & \xmark \\
B-SL~\cite{B_SL_46}   &   \xmark &  \cmark & \xmark  & \xmark \\
R3eLU~\cite{secure_SL_42}   &  \xmark &  \cmark & \xmark  & \xmark \\
\midrule

\textbf{Ours}   &   \cmark &  \cmark &  \cmark  & \cmark \\

\bottomrule

\end{tabular}
}
\end{table}

\subsubsection{SL Designs}

Users are increasingly unwilling to disclose sensitive information. SL~\cite{matsubara2022split_6} addresses this by enabling outsourced training without sharing raw data. 
SplitNN~\cite{vepakomma2018split_13} demonstrated SL for medical diagnostics, enabling collaboration between centralized,
and local institutions while protecting patient privacy. Poirot et al.~\cite{poirot2019split_40} extended SL to multi-center healthcare settings with restricted data sharing. Hu~\cite{hu2025split_7} highlighted the temporal nature of clinical data and proposed grouping patient samples by time to better capture sequential dependencies in SL.
Gupta et al.~\cite{gupta2018distributed_12} placed the output layer on the client to prevent label exposure. Lyu et al.~\cite{u_shaped_39} added parallel computing and optimized resource allocation to improve edge efficiency. UVSL~\cite{u_shaped_38} combined the U-shaped design with DP to protect both labels and data.

\subsubsection{SL Privacy Threats and Defenses}

SL is a privacy-preserving alternative to centralized training, but remains vulnerable to label and feature inference attacks.
Abuadbba et al.~\cite{abuadbba2020can_15} showed that 1D CNNs trained with SL can match non-split accuracy yet still leak privacy, as servers can infer input features from intermediate outputs, constituting the first feature inference attack in SL. RecoNN~\cite{inversion_26} trains a reconstruction network using shadow models and shadow samples; at attack time, target weights are input to reconstruct images of the target’s training data. Unsplit~\cite{unsplit_27} jointly searches the input and parameter spaces to reconstruct inputs and produce a functional clone whose outputs approximate the client’s smashed data. FORA~\cite{xu2024_FORA_41} uses public auxiliary data for feature-level transfer, building a surrogate client to train an attack model that reconstructs private inputs covertly. Pasquini et al.~\cite{pasquini2021unleashing_16} verified across settings that a malicious server can reconstruct a client’s private dataset via feature inference.

To mitigate these threats, several defenses have been explored. UVSL~\cite{u_shaped_38} keeps label-sensitive layers on the client and adds local DP noise to smashed activations, limiting label inference and input reconstruction. Lyu et al.~\cite{u_shaped_39} placed the model tail on the client and perform local backpropagation, preserving label confidentiality. U-shaped designs reduce label exposure but increase communication, since each iteration requires at least two round-trips between client and server.
Ngoc et al.~\cite{B_SL_46} proposed Binarized SL (B-SL), which binarizes local layers and applies differential privacy to mixed smashed data. Mao et al.~\cite{secure_SL_42} introduce a randomized activation, R3eLU, to perturb forward smashed data and backward local gradients while integrating a DP mechanism. 

\subsubsection{Copyright Protection}
As DNNs are co-developed across institutions, model copyright~\cite{AIcopyright_74} and training traceability are essential. Copyright protection secures legitimate ownership and proper attribution for contributors. Training traceability verifies that trainers performed the required and correct procedures, ensuring accountability and trust in collaborative settings~\cite{toward-FL}.
EWDNN~\cite{uchida2017embedding_18} embeds binary watermarks into weights via a training regularizer, enabling white-box extraction without accuracy loss. RIGA~\cite{wang2021riga_19} adds secret keys, nonlinear mappings, and redundancy to resist detection and removal. Both methods assume centralized, single-node training and are unsuitable for collaborative settings. For distributed scenarios, FedIPR~\cite{li2022fedipr_20} verifies client ownership in federated learning by embedding user-specific watermarks during local training and using challenge–response verification. FedIPR targets gradient-synchronized FL and does not transfer to SL, where model segments are asymmetrically partitioned. 


\textbf{Ours!} Those solutions are not perfect (as compared in Table~\ref{tab_label_privacy}). We introduce label expansion that obscures both label counts and semantics, preventing the trainers from inferring true labels or the task, and enabling legitimate use only by parties holding the true$\leftrightarrow$pseudo mapping. We also use chained watermarking that cryptographically links successive training stages to ensure verifiable training integrity. \textsc{CliCooper} thus allows multiple compute-constrained clients to train the model and removes reliance on a single high-capacity server.

%% file: sections_1/7_conclusion.tex
\section{Concluding Remarks}
\label{sec:conclusion}

We presented \textsc{CliCooper}, a new multi-client SL framework that collaboratively preserves data privacy, enables verifiable training, and protects model copyright. Our method integrates label expansion to mask true labels and prevent unauthorized model use, and a Gaussian noise-based mechanism to safeguard smashed data from inversion. In addition, dynamic chained watermarks provide cryptographically linked training traceability and copyright protection.
We experimentally confirmed that \textsc{CliCooper} maintains 100\% of baseline accuracy, with some models even achieving up to 2\% accuracy. It effectively defends against strong adversaries: the success rate of internal clustering attacks reduces to 0\%, the similarity of inversion-reconstructed samples drops from 45\% to 0.03\%, and the surrogate model accuracy from model extraction attacks falls to around 1\%, equivalent to random guessing.


\input{tables/table-notation}

%% file: tables/table-notation.tex
\begin{table}[!hbt]
    \renewcommand\arraystretch{1.0}
    \small
    \caption{List of Notation}
    \label{tab:notation}
    \begin{center}
    \begin{threeparttable}
    \resizebox{\linewidth}{!}{
    \begin{tabular}{cl}
    \toprule
    \multicolumn{1}{c}{\textbf{Notation}} &  \multicolumn{1}{c}{\textbf{Definition}} \\
    \midrule
   \multicolumn{1}{c}{\cellcolor{yellow!15}$\mathcal{C}$/$\mathcal{T}$/$\mathcal{V}$} & \multicolumn{1}{l}{Data client/Trainer client/Verifier} \\

     \multicolumn{1}{c}{\cellcolor{yellow!15}$\mathbb{D}^*$/ $\mathbb{D}$/ $\mathbb{D}_{t}$} & \multicolumn{1}{l}{Final/Original/Test Dataset} \\
    \cellcolor{yellow!15}$Y$/$Y^*$ & \multicolumn{1}{l}{True/Pseudo label of Dataset} \\
    \cellcolor{yellow!15}$q$ & \multicolumn{1}{l}{The class number of the dataset} \\
    \cellcolor{yellow!15}$g_i$ & \multicolumn{1}{l}{The expansion factor of the labels $Y_{i}^{*}$} \\
    \cellcolor{yellow!15}$\gamma$ & \multicolumn{1}{l}{The average expansion factor of $Y^*$ relative to $Y$} \\
    \cellcolor{yellow!15}$\mathcal{G}_Y$/$\mathcal{G}_{Y}^{-1}$ & \multicolumn{1}{l}{The one-to-many/many-to-one mapping between true and pseudo label} \\
    \cellcolor{yellow!15}$\varepsilon$/$\mathbb{G(\cdot )}$ & \multicolumn{1}{l}{DP privacy budget/DP noise injection} \\

    \cellcolor{yellow!15}$\!\mathbb{M}_{\mathcal{T}_{i}}$ & \multicolumn{1}{l}{$\mathcal{T}_{i}$'s activation for $\mathcal{T}_{i+1}$} \\
    \cellcolor{blue!8}$\mathbb{M}_{\mathcal{C}}^{\text{\tiny DP}}$ & \multicolumn{1}{l}{DP-protected activation} \\
    \cellcolor{blue!8}$\bar{A}$ & \multicolumn{1}{l}{the activation of the $\mathcal{C}$} \\
    \cellcolor{blue!8}$W$ & \multicolumn{1}{l}{The whole SL model/weights} \\
    \cellcolor{blue!8}$W_{\mathcal{T}_{i}}$ & \multicolumn{1}{l}{The $i$th trainer's model/weights} \\
    \cellcolor{blue!8}$W_{\mathcal{C}}$ & \multicolumn{1}{l}{$\mathcal{C}$'s pre-trained encoder/model} \\
    \cellcolor{blue!8}$\Lambda$ & \multicolumn{1}{l}{Copyright information embedded in the model} \\
    \cellcolor{blue!8}$\Lambda_{\mathcal{T}_{i}}$ & \multicolumn{1}{l}{$\mathcal{T}_{i}$'s watermark} \\
    \cellcolor{yellow!15}$B$ & \multicolumn{1}{l}{The watermark size} \\
    \multicolumn{1}{c}
    {\cellcolor{yellow!15}$k_{\mathcal{T}_{i}}$} & Embedding key for $\Lambda_{\mathcal{T}_{i}}$ \\
    \multicolumn{1}{c}{\cellcolor{yellow!15}$Z_{\mathcal{T}_{i}}$} & Selection matrix for weights in $W_{\mathcal{T}_{i}}$, used for $\Lambda_{\mathcal{T}_{i}}$ \\
    \multicolumn{1}{c}{\cellcolor{yellow!15}$\mu_i$} & Unique secret nonce used in hash computation \\
    \multicolumn{1}{c}{\cellcolor{yellow!15}$ID_{\mathcal{T}_{i}}$} & Unique $\mathcal{T}_{i}$ node identity\\
    \multicolumn{1}{c}{\cellcolor{yellow!15}$\eta$} & Watermark detection rate \\
    \multicolumn{1}{c}{\cellcolor{blue!8}$\mathbb{O}(\cdot)$} & \multicolumn{1}{l}{Watermark extraction function} \\
    \multicolumn{1}{c}{\cellcolor{blue!8}$l_w(\cdot)$/$l_\Lambda(\cdot)$} & \multicolumn{1}{l}{The loss function for main task and embedding watermark} \\
    \bottomrule
    \end{tabular}
    }
    \end{threeparttable}
    \end{center}
    \vspace{-0.1in}
    \end{table}